\theoremstyle{plain}
\newtheorem{thm}{Theorem}[section]
\newtheorem{rem}[thm]{Remark}
\theoremstyle{definition}
\newtheorem{ex}[thm]{Example}
\newtheoremstyle{myremark}
{3pt}
{3pt}
{\small \rmfamily}
{5pt}
{\rmfamily}
{:}
{.5em}
{}
\theoremstyle{myremark}
\def\R{\mathbb{R}}
\def\Z{\mathbb{Z}}
\def\txtd{{\textnormal{d}}}
\def\txte{{\textnormal{e}}}
\def\txti{{\textnormal{i}}}
\def\ra{\rightarrow}
\def\I{\infty}
\newcommand{\be}{\begin{equation}}
\newcommand{\ee}{\end{equation}}
\newcommand{\benn}{\begin{equation*}}
\newcommand{\eenn}{\end{equation*}}
\newcommand{\bea}{\begin{eqnarray}}
\newcommand{\eea}{\end{eqnarray}}
\newcommand{\beann}{\begin{eqnarray*}}
	\newcommand{\eeann}{\end{eqnarray*}}
\newcommand{\myendex}{$\blacklozenge$\end{ex}}
\newcommand{\myendexerc}{$\lozenge$\end{exerc}}
\newcommand{\myendpexerc}{$\lozenge$\end{pexerc}}
\newcommand{\rd}{\mathrm{d}}
\newcommand{\ti}{\rightarrow\infty} 
\newcommand{\tz}{\rightarrow0}  
\newcommand{\Cst}{C^{\natural}} 
\newcommand{\rmean}{\bar{r}}  
\newcommand{\rmin}{r_\text{min}}
\newcommand{\rmax}{r_\text{max}}
\newcommand{\rmeanRZ}{\langle\rmean\rangle} 
\newcommand{\rminRZ}{\langle\rmin\rangle}
\newcommand{\rmaxRZ}{\langle\rmax\rangle}
\newcommand{\im}{\txti}
\newcommand{\Tt}{T_\text{tr}}
\newcommand{\evat}[1]{\biggr\rvert_{#1}} 
\begin{document}

\title{Graphop Mean-Field Limits and Synchronization for the Stochastic Kuramoto Model}

\author{Marios Antonios Gkogkas} 
\affiliation{Department of Mathematics, Technical University of Munich, 85748 Garching b.~M\"unchen, Germany}
\author{Benjamin J\"uttner}
\affiliation{Department of Applied Mathematics and Computer Science, Technical University of Denmark, 2800 Kgs. Lyngby, Denmark}
\author{Christian Kuehn}
\affiliation{Department of Mathematics, Technical University of Munich, 85748 Garching b.~M\"unchen, Germany}\affiliation{Complexity Science Hub Vienna, 1070 Vienna, Austria}
\author{Erik Andreas Martens}
\email[Corresponding author: ]{erik.martens@math.lth.se}
\affiliation{Centre for Mathematical Sciences, Lund University, S\"olvegatan 18, 221 00 Lund, Sweden}
\date{\today} 

\begin{abstract}
Models of coupled oscillator networks play an important role in describing collective synchronization dynamics
in biological and technological systems. The Kuramoto model describes oscillator’s phase evolution and explains
the transition from incoherent to coherent oscillations under simplifying assumptions including all-to-all coupling
with uniform strength. Real world networks, however, often display heterogeneous connectivity and coupling
weights that influence the critical threshold for this transition. We formulate a general mean field theory (Vlasov-Focker
Planck equation) for stochastic Kuramoto-type phase oscillator models, valid for coupling graphs/networks with
heterogeneous connectivity and coupling strengths, using graphop theory in the mean field limit. Considering
symmetric odd-valued coupling functions, we mathematically prove an exact formula for the critical threshold for the
incoherence-coherence transition. We numerically test the predicted threshold using large finite-size representations of
the network model. For a large class of graph models, we find that the numerical tests agree very well with the predicted
threshold obtained from mean field theory. However, the prediction is more difficult in practice for graph structures that are sufficiently
sparse. Our findings open future research avenues toward a deeper understanding of mean-field theories for heterogeneous systems. \\
\textbf{Keywords:} Kuramoto model, phase oscillators, synchronization, heterogeneous graph, mean-field limit, graphop.\\
\end{abstract}

\maketitle

\begin{quotation}
{\bf
Networks of coupled oscillators appear in an impressive range of systems in nature and technology where they display collective dynamics, such as synchronization
~\cite{PikovskyBook2001,Strogatz2003,Buzsaki2006rhythms}.
The Kuramoto model describes the phase evolution of oscillators~\cite{Kuramoto1975,Kuramoto1984} and explains the transition from incoherent to coherent synchronized oscillations for a critical threshold of the coupling strength under simplifying assumptions, such as all-to-all coupling with uniform strength~\cite{Strogatz2000,Acebron2005}; however, real world networks often display strong heterogeneity in connectivity and coupling strength, which affect the critical threshold~\cite{Gleeson2012}. We derive a mean field theory for stochastic Kuramoto-type models and extend it to a large class of heterogeneous graph/network structures via graphop descriptions valid for the mean-field limit. We prove a mathematically exact formula for the critical threshold, which we test numerically for large finite-size representations of the network model.
}
\end{quotation}


\section{Introduction}
The discovery of synchronization dates back to 1665 with Christiaan Huygens' observations of two synchronizing pendulum clocks~\cite{Huygens1967}, and its mathematical modeling likely began with Norbert Wiener who was inspired by neuronal oscillations in the brain~\cite{Strogatz1994}. Wiener's formulation of the problem, however, was too general to allow for any analytical progress; simplifying assumptions were necessary to render the problem mathematically tractable~\cite{Winfree1967}, culminating in Yoshiki Kuramoto's paradigmatic model~\cite{Kuramoto1975,Kuramoto1984}. Kuramoto's original model describes the time evolution of the oscillator phases $\theta_k=\theta_k(t)$,
\begin{equation*}
\frac{\txtd}{\txtd t}\theta_k=:\dot{\theta}_k = \omega_k + \frac{C}{N}\sum_{j=1}^N  \sin(\theta_j - \theta_k),
\end{equation*}
where $k \in \{1,\ldots N\} =:[N]$, the coupling interaction between oscillators is first order, the coupling is all-to-all with uniform strength $C$, and the intrinsic frequencies $\omega_k$ are drawn unimodally from a distribution $g$ centered in the origin. The level of synchronization in this transition is aptly captured using the order parameter  $r(t)=\frac{1}{N}\left|\sum_{j=1}^N \exp{(\txti \theta_j(t))}\right|$, which tends to 0 when oscillator phases are incoherent (disordered) for weak coupling or to 1 when oscillators lock their frequencies and phases clump together (we say that the phases are coherent / the oscillators synchronize). When frequencies are identical, $\omega_k=\omega_j$ for all $k,j\in[N]$, the so-called synchronization manifold defined by $\theta_k(t)=\theta_j(t)$ for all $k,j\in[N]$ exists and is attractive for $C>0$; vice versa, when frequencies are non-identical (or symmetry is broken due to some other mechanism, see below), the loss or gain of coherence plays out in a competition between the strength of the heterogeneity and coupling strength $C$. Thus, for a set distribution width of the intrinsic frequencies, Kuramoto's model exhibits a transition from incoherent to coherent oscillations as the coupling strength $C$ surpasses a certain threshold value $\Cst$.

Kuramoto's initial heuristic analysis was based on a self-consistency equation for the order parameter~\cite{Sakaguchi1986}, allowing one to predict the critical coupling strength associated with the incoherence-coherence transition. A more formal mathematical treatment, facilitating deeper insights would, however, require a mean field theory valid in the limit, $N\rightarrow\infty$. Such a theory~\cite{Gupta2018} describes the dynamics in terms of a density function in the oscillator phases, $\rho=\rho(\phi,t)$, which evolves according to a transport equation (formally, a Vlasov-Fokker-Planck equation; see  Eq.~\eqref{eq:VFPE1}). Such a description was used by Strogatz and Mirollo~\cite{Strogatz1991} to investigate the stability of the incoherent branch for $C<\Cst$ where $\rho=1/(2\pi)$ by studying the associated eigenvalue spectrum and to (re-)derive the critical coupling $\Cst=2/(\pi g(0))$, where $g(0)$ denotes the maximum value of a unimodal frequency distribution $g$;
this approach was further developed and applied to variants of the Kuramoto model~\cite{Acebron1998}.
Other studies focused on the stability analysis for the partially synchronized branch ($C>\Cst$)~\cite{Mirollo2007}. An exact low dimensional description in terms of the macroscopic dynamics (order parameter) allowing to express the evolution of the order parameter in terms of an ordinary differential equation became available later ~\cite{Watanabe1993,Ott2008b,BickMartens2020}.

While Kuramoto's simplifying assumptions allowed for making significant progress in the mathematical understanding of the synchronization phenomenon, to understand real world oscillator dynamics,
it is desirable to break these assumptions toward increasing complexity. There are a number of ways of doing this; here, we are concerned with how the incoherence-coherence transition is affected by the presence of (thermal) noise and, in particular, network heterogeneities, which play a major role in real systems~\cite{GarciaOjalvo2012,Jeong2000,Brockmann2006,Castellano2009,Bullmore2009,Suki1994,Rohden2012,Kaluza2010}.
Indeed, the ability of coupled oscillators to synchronize has been investigated under the influence of noise~\cite{Sakaguchi1986,Son2010}, heterogeneous connectivity~\cite{Strogatz2001,Restrepo2006}, or heterogeneous coupling, such as non-local~\cite{Kuramoto2002,Panaggio2014}, $k$-nearest-neighbor~\cite{Wiley2006}, or random coupling strengths~\cite{SherringtonKirkpatrick1975,Ko2008} and also on experiments~\cite{Kiss2002,Taylor2009,Calugaru2020,MartensThutupalli2013} where oscillators are subject to real world influences.

Mean-field descriptions for $N\rightarrow \infty$ are well established for various theoretical frameworks including coupled oscillator networks~\cite{Gupta2018}. Our focus thus lies on mean-field limits valid for complex networks~\cite{Gleeson2012}, i.e., to generalize the Vlasov-Fokker-Planck (VFPE) equation (see Eq.~\eqref{eq:VFPE1}) so that it is capable of accurately describing the dynamics in complex networks characterized by heterogeneities in the connectivity or coupling strength. In particular, this includes cases where the adjacency matrix defining interactions between finitely many vertices is neither a full graph nor a highly symmetric structure, such as a lattice. In order to incorporate such structures, it is necessary to extend the description of (weighted) graph structures to the mean-field limit. This is possible via so-called graphons, which rely on concepts of the theory of limits of graph sequences~\cite{lovasz2006limits,lovasz2012large} or even more generally utilizing the theory of graphops~\cite{BS20}. Intuitively, graph limit theory provides a way to arrange limits of discrete graphs as continuous objects. Graphons achieve this, mostly within the context of dense graphs, using a coupling kernel function that describes the connectivity in the limit. Graphops generalize graphons, also incorporating many intermediate and sparse density graph limits in addition. Graphops can be represented as operators or via an associated measure-theoretic representation; i.e., they are generalizing purely kernel-based operators to more general operators. Recent studies have used graph limit theories to pursue the goal of heterogeneous mean-field limits. Several mathematical approaches have been successful in providing rigorous proofs for VFPEs, where nonlocal integral terms appear to take into account the heterogeneous coupling structure~\cite{ChibaMedvedev,KaliuzhnyiVerbovetskyiMedvedev1}. Recently, a general theoretical framework based on graphops has been put forward (by some of the authors of this paper) that allows us to generalize mean-field limit VFPEs easily from particular cases (nonlocal coupling or standard all-to-all) to describe modern complex network structures~\cite{Kuehn2020,GkogkasKuehn,KuehnXu,GkogkasKuehnXu1}.

In the present paper, we extend previous work~\cite{Kuehn2020,GkogkasKuehn,KuehnXu} to the stochastic case and formally derive a mean-field description based on graphop theory for the Kuramoto model with identical oscillators interacting via first order harmonic coupling function and non-uniform coupling strengths under the influence of (thermal) noise. We then derive rigorous results for the critical threshold for the incoherence-coherence transition ($\Cst$) by deriving a stability formula for the incoherent solution branch. A difficulty arises as it is unclear what demarcates the boundary of validity of mean-field PDEs for complex heterogeneous graphs; i.e., at some level of graph heterogeneity, it may be too difficult to accurately capture details of very sparse graph structures. As we cannot be sure under what circumstances our results correspond to the dynamics obtained for finite graphs (rigorous convergence results are still needed), we carry out detailed numerical simulations to test our results for various finite graph structures.

This article is structured as follows. In Sec.~\ref{sec:derivation}, we introduce a formal derivation of the mean-field equations for $N\rightarrow \infty$. In Sec.~\ref{sec: chap energy methods, linear stability of the incoherent state}, we derive the critical coupling strength $\Cst$ for the continuum limit, based on the graphop mean-field limit equation. In Sec.~\ref{sec:transition}, we carry out numerical simulations to investigate how the incoherence-coherence transition point predicted by the mean-field theory carries over to finite graphs for a range of graph structures, including dense and sparse topologies. Finally, we discuss our results in Sec.~\ref{sec:discuss}.


\section{Formal Derivation of the Mean-Field Equations}
\label{sec:derivation}
As discussed above, we are interested in mean-field models for stochastic Kuramoto(-type) models on networks~\cite{RO15}. The individual dynamics for the coupled identical oscillators is given by
\begin{equation}
\label{eq: stoch kuramotos model on graphs}
\frac{\txtd \theta^N_k}{\txtd t}=:\dot{\theta}_k^N = \frac{C}{N}\sum_{j=1}^N A^N_{kj} D(\theta_j^N - \theta_k^N) + \sqrt{2 \beta^{-1}} \dot{W}_k,\\
\end{equation}
where $k \in [N]:= \{1,...,N\}$ and $\theta_k=\theta_k(t) \in \mathbb{T}:=\R/(2\pi\Z)$ denotes the phase of the $k$th oscillator, $(A^N_{k,j})_{k,j \in [N]}$ denotes a weighted and non-negative adjacency matrix of the network (i.e., a graph $G$ with adjacency matrix $A^N_{kj}$), $D:\mathbb{T}\to \mathbb{R}$ is a sufficiently regular coupling function (e.g., $D=\sin$), $C>0$ is the coupling strength, $\beta>0$ is a diffusion constant controlling the noise level, and $W(t)=(W_1(t),\ldots,W_N(t))^\top$ is a vector of $N$ independent Brownian motions so that $\dot{W}_k=\dot{W}_k(t)$ is just a white noise forcing for each oscillator. The following derivation extends~\cite{Kuehn2020} to the stochastic case. To understand the formal derivation, let us consider the classical case of the Kuramoto model with all-to-all coupling with uniform strength, i.e., for $A^N_{k,j} = 1$ for all $k,j \in [N]$ and $D=\sin$. In other words, \eqref{eq: stoch kuramotos model on graphs} now reads as
\begin{equation}
\label{eq: stoch kuramotos model on graphs 1}
\dot{\theta}_k^N = \frac{C}{N}\sum_{j=1}^N  \sin(\theta_j^N - \theta_k^N) + \sqrt{2 \beta^{-1}} \dot{W}_k, \quad k \in [N].
\end{equation}
Let us introduce the complex order parameter
\begin{equation}\label{eq:orderparameter}
r\txte^{\txti\psi} := \frac{1}{N}\sum_{j=1}^{N}\txte^{\txti \theta_j}.
\end{equation}
Multiplying this equation by $\txte^{-\txti \theta_k}$ and equating imaginary parts, we have
\begin{equation}
r\sin(\psi - \theta_k) = \frac{1}{N}\sum_{j=1}^N  \sin(\theta_j^N - \theta_k^N)
\end{equation}
which implies that
\begin{equation}
\label{eq: stoch kuramotos model on graphs 2}
\dot{\theta}_k^N = C r\sin(\psi - \theta_k)  + \sqrt{2 \beta^{-1}} \dot{W}_k, \quad k \in [N].
\end{equation}
From \eqref{eq: stoch kuramotos model on graphs 2}, the mean-field character of the problem is visible as the $k$th oscillator just feels the averaged input from all other oscillators so one can think of a single typical oscillator and aim to analyze its dynamics. Let $\rho(t,\theta)~\rd \theta$ denote the fraction of oscillators with phase between $\theta$ and $\theta+ \rd \theta$ at time $t$; i.e., $\rho$ is a probability density. Assuming a law of large numbers in the limit $N\to \infty$, we formally get
\begin{equation}
 r\txte^{\txti\psi} = \frac{1}{N}\sum_{j=1}^{N}\txte^{\txti \theta_j} \to \int_0^{2\pi} \txte^{\txti \phi}\rho(t,\phi)~ \rd \phi.
\end{equation}
Now, using the same trick as above (i.e., multiplying both sides of the last equation by $\txte^{-\txti u}$ and taking imaginary parts), Eq.~\eqref{eq: stoch kuramotos model on graphs 2}
becomes in the limit $N\to \infty$
\begin{equation}
\label{eq: stoch kuramotos model on graphs 3}
\dot{u} = C \int_0^{2\pi} \sin( \phi -u)\rho(t,\phi) ~\rd \phi + \sqrt{2 \beta^{-1}} \dot{W}.
\end{equation}
Finally, the continuity equation, also called the Vlasov-Fokker-Planck equation (VFPE), for the probability density $\rho$, respectively, for the law of the limiting process $u$, reads as
\begin{align}\label{eq:VFPE1}
\begin{split}
\partial_t\rho &= -\partial_\theta \Big( \rho V(\rho)
 \Big) + \frac{1}{\beta} \partial^2_\theta\rho,\\
 V(\rho)&:= C \int_0^{2\pi} \sin(\phi - \theta)\rho(t,\phi)~\rd \phi.
 \end{split}
\end{align}
In summary,~\eqref{eq:VFPE1} is a partial differential equation with a first-order transport/advection-type term with a nonlocal convolution term involving the sine-nonlinearity mediating the coupling and with a second-order spatial diffusion term arising directly from the white noise forcing.

Now, let us come back to Eq.~\eqref{eq: stoch kuramotos model on graphs}. In this case, the next natural generalization step is to assume that the network (i.e., a graph $G$ with adjacency matrix $A^N_{kj}$) is sufficiently connected and does not have components, which are more connected than others; see also Ref.~\cite{RO15}. Moreover, we assume that
there exists a local order parameter $r_k \txte^{\txti \psi_k}$, which is locally proportional to a single global order parameter $r\txte^{\txti\psi}$ weighted by the degree $\kappa_k$ for each node; i.e., we have
\begin{equation}
\kappa_k r \txte^{\txti \psi} = r_k \txte^{\txti \psi_k} := \sum_{j=1}^N A^N_{kj}\txte^{\txti\theta_j}.
\end{equation}
By multiplying the local order parameter by $\txte^{-\txti\theta_k}$ and equating the imaginary parts in the last equation, we obtain
\begin{equation}
\label{eq: stoch kuramotos model on graphs 4}
\dot{\theta}_k^N = C \, r\, \kappa_k\sin(\psi - \theta_k)  + \sqrt{2 \beta^{-1}} \dot{W}_k, \quad k \in [N].
\end{equation}
Now, let $\rho(t,\theta,\kappa)~\rd \theta$ denote the probability for the fraction of oscillators having a phase between $\theta$ and $\theta+ \rd \theta$ and a degree $\kappa$ at time $t$. Note carefully that we have added an additional variable $\kappa$ to the density, which captures the (degree) heterogeneity of the network. If we assume that the network is uncorrelated and has degree distribution $d(\kappa)$, one is tempted to assume that in the limit $N\to \infty$, we have
\begin{equation}
 r\txte^{\txti\psi} = \frac{1}{\kappa_k}\sum_{j=1}^N A^N_{kj} \, \txte^{\txti\theta_j}
 \to
 \int_0^{2\pi}\int_0^\infty \txte^{\txti\phi} \frac{\kappa \, d(\kappa)}{\langle \kappa \rangle } \rho(t,\phi,\kappa)  ~\rd \kappa~\rd \phi,
\end{equation}
where $\langle \kappa \rangle$ is the average degree of a vertex in the graph and  $\frac{\kappa d(\kappa)}{\langle \kappa \rangle }\rho(t,\phi,\kappa)$ is the probability density for an edge having its end at a vertex of phase $\phi$ and degree $\kappa$ at time $t$.
Now, using the same trick as before, Eq.~\eqref{eq: stoch kuramotos model on graphs 4} becomes in the limit $N\to \infty$
\begin{equation}
\label{eq: stoch kuramotos model on graphs 5}
\dot{u} = \frac{C}{\langle \kappa \rangle}\int_0^{2\pi} \int_0^\infty \sin(\phi - u)~ l ~d(l) ~\rho(t,\phi,l) ~\rd\phi~ \rd l  + \sqrt{2 \beta^{-1}} \dot{W}.
\end{equation}
The continuity equation for the probability density $\rho$, respectively, the law of the limiting process $u$, reads as
\begin{subequations}
\begin{align}\label{eq: VFPE2}
\partial_t\rho &= -\partial_\theta \Big( \rho V[G](\rho)
\Big) + \frac{1}{\beta} \partial_\theta^2\rho,\\
V[G](\rho) &:= \frac{C}{\langle \kappa \rangle}\int_0^{2\pi} \int_0^\infty \sin(\phi - \theta)~ l~ d(l)~\rho(t,\phi,l)~ \rd \phi~ \rd l.
\end{align}
\end{subequations}
Thus, in comparison with the VFPE for the classical case of all-to-all coupling with uniform strength \eqref{eq:VFPE1}, we had to replace
\begin{equation}
\rho(t,\phi) \text{\quad by \quad}  \int_0^\infty \frac{ l d(l)}{\langle l \rangle }\rho(t,\phi,l) ~ \rd l.
\end{equation}
We can view this step as incorporating the structure of graph/network $G$ appearing in the Vlasov equation via an operator, which acts on the density $\rho$. In fact, one can even hope to completely remove averaging over the variable $\kappa$ that we used to capture the heterogeneity and just keep $\kappa$ as a new variable in the density, which then yields a whole hierarchy of mean-field VFPEs, one for each degree. This set of ideas can then be thought even further and one can directly replace the adjacency matrix by a coupling kernel and there are numerous papers in this direction~\cite{ChibaMedvedev,KaliuzhnyiVerbovetskyiMedvedev1,GkogkasKuehn,KuehnXu}.
Yet, it seems best to think of generalizing VFPEs more abstractly~\cite{Kuehn2020} by viewing the underlying network influence as given by some linear operator $A$ acting on the density so that a more abstract form of VFPEs would be given by
\begin{subequations}\label{eq:abstract_VFPE}
\begin{align}\label{eq: VFPE3}
\partial_t\rho &= -\partial_\theta \Big( \rho V[A](\rho)
\Big) + \frac{1}{\beta} \partial_\theta^2\rho,\\
V[A](\rho) &= C\int_0^{2\pi} D(\phi - \theta)  (A\rho)(t,\phi,x) ~\rd\phi,
\end{align}
\end{subequations}
where $x$ is a suitable variable that tracks the heterogeneity of the network so that one effectively obtains a family of VFPEs, and we have also replaced the sine-coupling again by a more general coupling function $D$. A typical choice of $x$ found in the literature would be to take it as a variable in the unit interval $x\in[0,1]=\Omega$, where points in the interval represent node labels in the infinite network limit ~\cite{ChibaMedvedev,KaliuzhnyiVerbovetskyiMedvedev1,GkogkasKuehn,KuehnXu}. Probably the most elegant abstract way to think of $A$ is as a graph operator, or graphop, as introduced in Ref.~\cite{BS20}. A graphop is a bounded, self-adjoint, and positivity-preserving operator $A: L^\infty(\Omega;m) \to L^1(\Omega; m)$, where $m$ is the reference measure on $\Omega$; e.g., one can pick the Lebesgue measure. To a given graphop $A$ always corresponds a family of finite measures $(\nu_x)_{x\in \Omega}$, called fiber measures, via the formula
\begin{equation*}
(Af)(x) = \int_{\Omega} f(y) ~\txtd\nu_x(y) \text{ \quad $x \in \Omega$\quad  for $f \in L^\infty(\Omega;m)$.}
\end{equation*}
Intuitively, we may view a graphop $A$ just as a generalized adjacency matrix for a symmetric graph and for a given node $x \in \Omega$, the fiber measure $\nu_x$ is just the edge distribution for this node. Indeed, for the finite-dimensional case, we can just pick $\Omega=[N]$ and $m$ as the uniform measure on $\Omega$, so that functions $f\in L^\infty(\Omega;m)$ can be identified with vectors in $\R^N$ and $Af$ is just the usual matrix-vector multiplication. Yet, we stress that in the limit $N\ra \I$, we need a space, such as $\Omega=[0,1]$, with the Lebesgue measure.

One may wonder, how far such an abstract construction for VFPEs involving graphops can work? It is clear that it works in simple cases, e.g., when the graph is all-to-all coupled as one can just drop the dependence on $x$. Also, if the graph is very dense and very regular with just two types of typical nodes, then one could take $x$ as a binary variable and so on. Furthermore, it is understood that it works for dense graphs, where $A$ can be represented by an integral operator with a sufficiently regular kernel, i.e., in the framework of so-called graphons. However, one does expect that there are growing networks as $N\ra \I$ that are so sparse and/or so heterogeneous that eventually, mean-field calculations may fail. Proving a precise boundary location on the space of networks to determine, when VFPEs are helpful and when they fail, seems out of reach at this point. Here, we take a pragmatic approach and start from the formal VFPE~\eqref{eq: VFPE3}, carry out stability analysis of the main bifurcation/phase transition to synchronization in the Kuramoto model, and then numerically simulate the dynamics for different discretized (i.e., finite-dimensional, large $N$) classes of graphops $A$ to check when the mean-field stability calculation is accurate. This is going to provide an indirect cross-check, whether a mean-field limit can work.


\section{Bifurcation/Phase Transition}
\label{sec: chap energy methods, linear stability of the incoherent state}
In the following, we consider~\eqref{eq: VFPE3}, and we assume for simplicity that
\begin{itemize}
        \item[(H0)] The coupling is non-trivial; i.e., $C>0$.
	\item[(H1)] $D$ is an odd $2\pi$-periodic function.
	\item[(H2)] $A$ is a graphop with a bounded $2\to 2$ norm; i.e., the following quantity exists and is finite:
	\begin{equation*}
	\parallel A \parallel_{2\to 2} := \sup_{v \in L^2(\Omega)} \frac{\parallel Av \parallel_2}{\parallel v \parallel_2} < \infty.
	\end{equation*}
\end{itemize}
This implies that $A$ can be uniquely extended to the Hilbert space $L^2(\Omega,m)$ (see Remark 2.12 in Ref.~\onlinecite{BS20}, for instance). For simplicity, we use the same notation for this extension; i.e., we write $A:L^2(\Omega,m) \to L^2(\Omega,m)$. For the solution $\rho(t,\theta,x)$ of \eqref{eq: VFPE3}, we define the $j$th Fourier coefficient as
\begin{equation}
z_j = \frac{1}{2\pi}\int_0^{2\pi} \txte^{- \txti j \theta}\rho(t,\theta,x)~\rd \theta, \quad j \in \mathbb{Z},
\end{equation}
where $\txti:=\sqrt{-1}$.
Note that we have effectively defined a family of Fourier coefficients that depends upon $x$, i.e., $\{(z_j)_x\}_{x\in\Omega}$, but we shall always write just $z_j$ in the calculation below and later discuss the $x$-dependence. Applying the Fourier transform to \eqref{eq: VFPE3}, exchanging integrals, and using integration-by-parts (in the second line), we have
\begin{widetext}
\begin{align}
\begin{split}
\label{ch enrgy methods, eq: Fourier trafo applied to VFPE 1}
\partial_t z_j &= \frac{1}{2\pi}\int_{\mathbb{T}}\txte^{- \txti j\theta} \Big(-\partial_\theta\{ \rho(t,\theta,x) V[A](\rho)(t,\theta,x)\} + \frac{1}{\beta} \partial^2_\theta \rho(t,\theta,x) \Big)~\rd \theta
\\
&=\frac{1}{2\pi} \left( \txti jC \int_\mathbb{T} \txte^{- \txti j\theta} \rho(t,\theta,x) \int_\mathbb{T} (A\rho)(t,\phi,x) D(\phi-\theta)~\rd\phi ~\rd \theta
- \frac{ j^2}{\beta} \underbrace{\int_\mathbb{T} \txte^{- \txti j\phi}\rho(t,\phi,x)~\rd\phi}_{= z_j}\right) \\
&=\frac{1}{2\pi} \left(\txti j C \sum_{l \in \mathbb{Z}}\hat{D}(l) \int_\mathbb{T} \txte^{ \txti(-j - l)\theta} \rho(t,\theta,x) \int_\mathbb{T} (A\rho)(t,\phi,x) \txte^{ \txti l \phi}~\rd\phi ~\rd \theta
- \frac{ j^2}{\beta} z_j\right)\\
&=\frac{1}{2\pi}  \Big(\txti j C \sum_{l \in \mathbb{Z}\setminus\{0\}}\hat{D}(l) z_{-j-l} Az_{-l} - \frac{ j^2}{\beta} z_j \Big),
\end{split}
\end{align}
\end{widetext}
where $\hat{D}$ denotes the Fourier transform of $D$, and in the last line, we used that $\hat{D}(0) = 0$, which follows from the fact that $D$ is an odd, periodic function. Moreover, $z_{-j} = \overline{z_{j}}$ holds, which follows from the fact that $\rho$ is real-valued. We can assume without loss of generality that $j \in \mathbb{N}$ to get the following system (i.e., the amplitude equation):
\begin{widetext}
    \begin{align}\label{eq:chap_energy_fourier_ana_for_stability_system_1}
    \begin{split}
    \partial_t z_j &= \frac{1}{2\pi} \Big\{ \Big( \txti C \hat{D} (-j) A - \frac{ j^2}{\beta} \Big)z_j +  \txti j C \sum_{l \in \mathbb{Z}, l \neq 0, -j}\hat{D}(l) z_{-j-l} Az_{-l} \Big\}, \quad
    j=1,2,\ldots.
    \end{split}
    \end{align}
\end{widetext}
The completely incoherent state $\rho_\I\equiv 1/(2\pi)$ of the oscillators corresponds to a uniform probability density over the circle, which translates into $z_0=1/(2\pi)$ and $z_j=0$ for all $j\neq 0$, and the state is also assumed to be independent of $x$; i.e., we assume that all different types of nodes are uniformly distributed across the circle for $\rho_\I$. Linearizing \eqref{eq:chap_energy_fourier_ana_for_stability_system_1} around this incoherent state yields via a straightforward calculation the system
\begin{subequations}
\label{eq, chap energy, fourie ana for stability, system 2}
	\begin{alignat}{4}
	\partial_t Z_j &= \frac{1}{2\pi} \Big\{\Big(  \txti C \hat{D} (-j) A - \frac{j^2}{\beta} \Big)Z_j \Big\} , \quad j = 1,2,\ldots,
	\end{alignat}
\end{subequations}
where we use $Z_j$ to denote the Fourier coefficients of the linearized dynamical system, and we observe that the linearized system nicely decouples. The question then is how does the stability of the $j$th Fourier mode depend on the eigenvalues of graphop $A$? On the Hilbert space $H:= L^2(\Omega,m)$, for any $j \in \mathbb{N}$, let us define the linearized operator $T_C^j: H \to H$,
\begin{equation*}
T_C^j w:=\frac{1}{2\pi}  \Big( \txti C \hat{D} (-j) A - \frac{j^2}{\beta} \Big) w.
\end{equation*}
Recall that the resolvent set $\rho(A)$ of the operator $A: H \to H$ is defined to be the set
\begin{align*}
\rho(A) := \{ \lambda \in \mathbb{C}: R_\lambda(A):= (A - \lambda I )^{-1}: \\
H \to H \text{ exists and is bounded} \},
\end{align*}
where $R_\lambda(A)$ is called the resolvent operator of $A$
and the spectrum of $A$ is the complement $\sigma(A) := \mathbb{C} \setminus \rho(A)$. Observe that for any $\lambda \in \mathbb{C}$, setting $\tilde{\lambda} := \frac{1}{2\pi} (\txti C \hat{D} (-j) \lambda  - \frac{ j^2}{\beta} ), $ we have
\begin{equation*}
R_{\tilde{\lambda}}(T^j_C) = \frac{1}{2\pi}   \txti C \hat{D}(-j) R_\lambda(A).
\end{equation*}
From this, we see that for all $j$ for which $\hat{D}(-j) \neq 0$, the condition that $R_\lambda(A)$ exists and is bounded is equivalent to the condition that $R_{\tilde{\lambda}}(T_C^j)$  exists and is bounded. From this, we conclude that for all $j \in \mathbb{Z}$ for which $\hat{D} (-j) \neq 0$, we have
\begin{equation*}
\sigma(T_C^j) = \frac{1}{2\pi} \Big( \txti C \hat{D} (-j) \sigma(A) - \frac{ j^2}{\beta} \Big).
\end{equation*}
For all other $j\in \mathbb{Z}$ (that is, for all $j$ for which $\hat{D} (-j) = 0$) we see immediately that $\sigma(T_C^j) = - \frac{ j^2}{\beta 2\pi}$. Since $A$ is bounded and self-adjoint, we have that $\sigma(A) \subset \mathbb{R}$ is a bounded set. Further note that since $D$ is an odd function, we must have $\hat{D}(j) = \txti \int_0^{2\pi} D(u) \sin( ju) \rd u \in \txti \mathbb{R}$. Finally, define
\begin{equation}
C^\natural := \inf \Big\{\frac{ j^2}{ \beta \txti \hat{D}(j) \lambda }: \lambda \in \sigma(A),\ j \in \mathbb{Z}^*,\ \txti\hat{D}(j) \lambda \geq 0 \Big\},
\label{eq: critical coupling}
\end{equation}
where $\mathbb{Z}^*:=\mathbb{Z}^+\cup \{0\}$. The next theorem shows that $C^\natural$ is a uniform parameter bound on the coupling strength independent of $x$, which means that smaller coupling leads to stability of incoherence, while above $C^\natural$, at least some classes of nodes synchronize at least partially.
More precisely, we have

\begin{thm} \textbf{(Incoherence-coherence transition)}
\label{prop: cha enrgy methods, incoherence-coherence transition}\\
Consider an odd, $2\pi$-periodic, continuous function $D: [0,2\pi] \to \mathbb{R}$ and a graphop $A: L^2(\Omega,m) \to L^2(\Omega,m)$. Then, the incoherent state $\rho_\infty$ is locally asymptotically stable for $0 < C < C^\natural$ and unstable for $C> C^\natural$.
\end{thm}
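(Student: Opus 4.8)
The plan is to read off local asymptotic stability versus instability entirely from the location of the spectrum of the linearized right-hand side, exploiting that the excerpt has already diagonalized the linearization into the decoupled family $T_C^j$, $j\in\mathbb{N}$. First I would fix a functional-analytic setting: regard the linearized dynamics as an abstract evolution equation $\partial_t Z = \mathcal{L}_C Z$ on the Hilbert space $\bigoplus_{j\geq 1} H$ with $H = L^2(\Omega,m)$, where $\mathcal{L}_C$ acts as $T_C^j$ on the $j$th summand. Since each $T_C^j = \frac{1}{2\pi}(\txti C\hat{D}(-j)A - \frac{j^2}{\beta})$ is the sum of the diffusion symbol $-\frac{j^2}{2\pi\beta}$ and the graphop part, and since the latter is bounded uniformly in $j$ (because $\sigma(A)$ is bounded by (H2) and $\hat{D}(j)$ is bounded), the generator $\mathcal{L}_C$ is a bounded perturbation of the analytic-semigroup generator coming from $\frac{1}{\beta}\partial_\theta^2$; hence $\mathcal{L}_C$ is itself sectorial and generates an analytic semigroup. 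This is what will let me pass from spectral information to genuine dynamical (in)stability.

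Second, I would pin down the spectrum. Using the spectral mapping already derived, $\sigma(T_C^j) = \frac{1}{2\pi}(\txti C\hat{D}(-j)\sigma(A) - \frac{j^2}{\beta})$ whenever $\hat{D}(-j)\neq 0$, together with the two structural facts that $\sigma(A)\subset\mathbb{R}$ (as $A$ is bounded and self-adjoint) and that $\hat{D}(-j)$ is purely imaginary (as $D$ is odd), the crucial observation is that every spectral value $\mu_{j,\lambda} = \frac{1}{2\pi}(\txti C\hat{D}(-j)\lambda - \frac{j^2}{\beta})$ is \emph{real}, since the product $\txti\hat{D}(-j)\lambda$ is a real number. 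Stability of the $j$th mode then reduces to the scalar inequality $\txti C\hat{D}(-j)\lambda < \frac{j^2}{\beta}$ for every $\lambda\in\sigma(A)$, while modes with $\hat{D}(-j)=0$ contribute only the manifestly negative value $-\frac{j^2}{2\pi\beta}$. Rearranging this inequality over all admissible pairs $(j,\lambda)$, and using $\hat{D}(-j) = -\hat{D}(j)$ to match the conventions of the paper, shows that the spectral bound $s(\mathcal{L}_C) := \sup_{j,\lambda}\mathrm{Re}\,\mu_{j,\lambda}$ is strictly negative precisely when $C < C^\natural$ and becomes positive for $C > C^\natural$, with $C^\natural$ exactly the infimum defining the critical coupling. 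Here I would also note that, because $\sigma(A)$ is bounded and $\hat{D}(j)\to 0$, the destabilizing term $\txti C\hat{D}(-j)\lambda$ stays bounded while $-\frac{j^2}{\beta}\to-\infty$; hence only finitely many modes can have spectral values near the imaginary axis, the supremum is attained, and the separation $s(\mathcal{L}_C) < 0$ holds with a genuine spectral gap for $C < C^\natural$.

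Third, I would upgrade the linear statement to the nonlinear one. For $C < C^\natural$, the spectral gap for the analytic semigroup generated by $\mathcal{L}_C$ gives exponential decay of the linear flow, and the principle of linearized stability for sectorial operators yields local asymptotic stability of $\rho_\infty$, provided the nonlinear remainder in the amplitude equations \eqref{eq:chap_energy_fourier_ana_for_stability_system_1} is controlled as a higher-order perturbation in the chosen norm. For $C > C^\natural$ there is a spectral value with strictly positive real part, and the corresponding instability theorem for analytic semigroups produces a growing direction, giving instability. I expect the main obstacle to lie precisely in this last step: making the passage from linear spectral data to genuine nonlinear local (in)stability rigorous for the nonlocal Vlasov--Fokker--Planck equation. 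This requires choosing a function space in which the quadratic coupling terms $z_{-j-l}Az_{-l}$ are well-defined and estimable as higher order, verifying the sectoriality and domain conditions uniformly across the infinitely many Fourier modes, and controlling the boundedness of $A$ consistently with the $\theta$-regularity encoded by the diffusion. The spectral computation itself is, by contrast, essentially bookkeeping once the reality of the spectrum is observed.
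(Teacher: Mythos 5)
Your proposal follows essentially the same route as the paper: the proof there is exactly the spectral bookkeeping you describe --- observing that $\txti\hat{D}(-j)\lambda$ is real, that pairs with $\txti\hat{D}(-j)\lambda<0$ never cross the imaginary axis, that the remaining pairs cross at $C_{j,\lambda}=j^2/(\beta\,\txti\hat{D}(-j)\lambda)$, and that $C^\natural$ is the infimum of these crossing values. The only difference is that you go further than the paper does: the semigroup framework and the upgrade from the linear spectral gap to genuine nonlinear local asymptotic stability, which you correctly identify as the delicate step, are not carried out (or even mentioned) in the paper's proof, which stops at the linearized spectral statement.
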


\begin{proof}
	Observe that for any $\lambda\in \sigma(A)$ and $j \in \mathbb{Z}^*$, such that $\txti \hat{D}(-j) \lambda<0$, the corresponding  element in the spectrum of $T_C^j$, $\tilde{\lambda}(C,j) = \frac{1}{2\pi}( \txti C \hat{D} (-j) \lambda - \frac{ j^2}{\beta}) \in \sigma(T_C^j)$, is strictly negative for any $C >0$; thus, it never crosses the imaginary axis. Thus, a crossing, for growing $C$, can occur only among those $\lambda\in \sigma(A)$ and $j \in \mathbb{Z}^*$ for which $\txti \hat{D}(-j) \lambda > 0$. Among all such $\lambda$ and $j$, the crossing occurs always at
	\begin{equation}\label{eq:crossing_occurs}
	C_{j ,\lambda}:= \frac{j^2}{ \beta \txti \hat{D}(-j) \lambda }.
	\end{equation}
	Observing that $C^\natural$ is the minimum of all these transition points, it follows immediately that $C^\natural$ is the smallest $C>0$ for which there exists $j \in \mathbb{N}$ such that
	an element in the spectrum of $T_C^j$ crosses the imaginary axis, namely, the element $\tilde{\lambda}(C^\natural,j) \in \sigma(T_{C^\natural}^j)$.
\end{proof}

As a first step, we want to carry out some specializations to examples and analytically consider some cases.

\begin{ex} \textbf{(Kuramoto model with first order interaction)}
\label{ex: crit couplin for class Kur}\\
We wish to specialize the general formula for the critical threshold in Eq.~\eqref{eq: critical coupling}, valid for Eq.~\eqref{eq:abstract_VFPE} [the continuum limit version of Eq.~\eqref{eq: stoch kuramotos model on graphs}], to the continuum limit version of the Kuramoto model in Eq.~\eqref{eq: stoch kuramotos model on graphs} such that oscillators interact via a first-order coupling function, i.e., 
$D(u)=\sin{u}$. Hence,
\begin{align*}
    \hat{D}(1) &= \frac{1}{2\pi} \int_0^{2\pi} \sin(u) \txte^{ \txti u} ~\rd u
    = \frac{\txti}{2}, \\
    \hat{D}(-1) &=  -\frac{\txti}{2}, \\
    \hat{D}(k) &= 0, \quad k \in \mathbb{Z}\setminus \{1,-1\}
\end{align*}
and $\txti \hat{D}(-1)  = \frac{1}{2} > 0$.
Then, by Theorem \ref{prop: cha enrgy methods, incoherence-coherence transition}, the incoherent state loses stability at
\begin{equation}\label{eq:Cst_classicalKM}
    C^\natural = \frac{2}{\beta\Lambda(A)},\,\,\,\,\Lambda(A):=\sup_{\lambda\in\sigma(A)}|\lambda|.
\end{equation}
\end{ex}

\begin{ex} \textbf{(Classical Kuramoto model on full graph)} \label{ex 2}\\
In the case of the \emph{full graph} (i.e., complete graph with uniform coupling strength), we have
\begin{equation*}
Af(x) = \int_\Omega f(y)~\txtd m(y), \quad x \in \Omega, f\in L^2(\Omega,m).
\end{equation*}
Clearly, $A$ is a non-invertible operator, and the only eigenvalue of $A$ is $1$ (The eigenvalue equation $Af = \lambda f$ implies that $f$ must be a constant, say $f_0\neq0$, satisfying $f_0 = \lambda f_0$. Thus, $\lambda=1$.) Moreover, in the case that $\lambda\in \mathbb{C} \setminus \{0,1\}$, the operator $A - \lambda I$  is invertible
since for any $g \in L^2(\Omega,m)$, the pre-image $f$ is achieved under the unique choice
\begin{equation*}
f := \frac{ \frac{c}{1 - \lambda} - g}{\lambda}, \quad c:= \int_\Omega g(y)~\rd m(y).
\end{equation*}
Thus, we have $\sigma(A) = \{ 0,1\}$.  Hence, for the Kuramoto model on the full graph, we obtain by Example \ref{ex: crit couplin for class Kur} that
\begin{equation}\label{eq:Cst_classicalKM_uniform}
C^\natural = \frac{2}{\beta}.
\end{equation}
\end{ex}
in agreement with previous analysis (see also, e.g., Ref.~\onlinecite{Strogatz2000}).

\begin{rem}
Sakaguchi~\cite{Sakaguchi} obtained for the critical coupling of the full graph the formula (in Sakaguchi's notation)
\begin{align}
K_C(D) &= 2 \Big( \int_{-\infty}^{\infty} \frac{1}{ \omega^2 + 1 }g(D\omega + \omega_0) ~\txtd\omega \Big)^{-1}.
\end{align}
In our framework, matching the assumptions and the notation correctly, we have $\omega_0= 0$, $D = \frac{1}{\beta}$,  and $g = \delta_0$. Note that in Sakaguchi's framework, the variance of the Brownian term $f_i(t)$ is $2Dt$, while in our framework, the Brownian term $\sqrt{2\beta^{-1}}W_k$ has variance $\frac{2}{\beta} t$ for each $k$; thus, we must have  $D = \frac{1}{\beta}$. Thus, Sakaguchi's formula simplifies to
\begin{align}
\begin{split}
K_C(D) &= 2 \Big( \int_{-\infty}^{\infty} \frac{1}{ \omega^2 + 1 }g(D\omega) ~\rd\omega\Big)^{-1}
\\
&= 2 \Big( \frac{1}{D}\int_{-\infty}^{\infty} \frac{1}{ \left(\frac{x}{D}\right)^2 + 1 } ~\rd\delta_0(x)\Big)^{-1} \\
&= 2 D = \frac{2}{\beta} = C^\natural,
\end{split}
\end{align}
which is exactly just the special case of the far more general formula we calculated in Example \ref{ex 2}.
\end{rem}

Although we have now a very nice formula for $C^\natural$, it is not immediately clear for which classes of networks this formula works as $N\ra \I$. After all, Theorem~\ref{prop: cha enrgy methods, incoherence-coherence transition} only makes claims about stability/instability based upon the assumption of the validity of the mean-field VFPE. Only if we already knew that the mean-field limit VFPE would be valid for certain classes of networks, i.e., if it does approximate --- in a suitable sense --- the oscillator system for finite but large $N$, then we could be certain applying our result for finite large networks. Proving such an approximation result in full generality is difficult, although first steps exist for the deterministic Vlasov case ~\cite{ChibaMedvedev,KaliuzhnyiVerbovetskyiMedvedev1,GkogkasKuehn,KuehnXu}. For example, one issue in this context is that the mean-field only holds in a scaling limit upon re-normalizing the sums appearing in the Kuramoto model suitably via the density of the graph. However, empirically testing the formula for $C^\natural$ via various classes of large finite networks using numerical simulation is certainly possible, and we shall proceed with this approach.


\section{Incoherence-coherence transition for finite and infinite oscillator networks}
\label{sec:transition}
\begin{figure}
    \begin{overpic}[width=\columnwidth,percent]{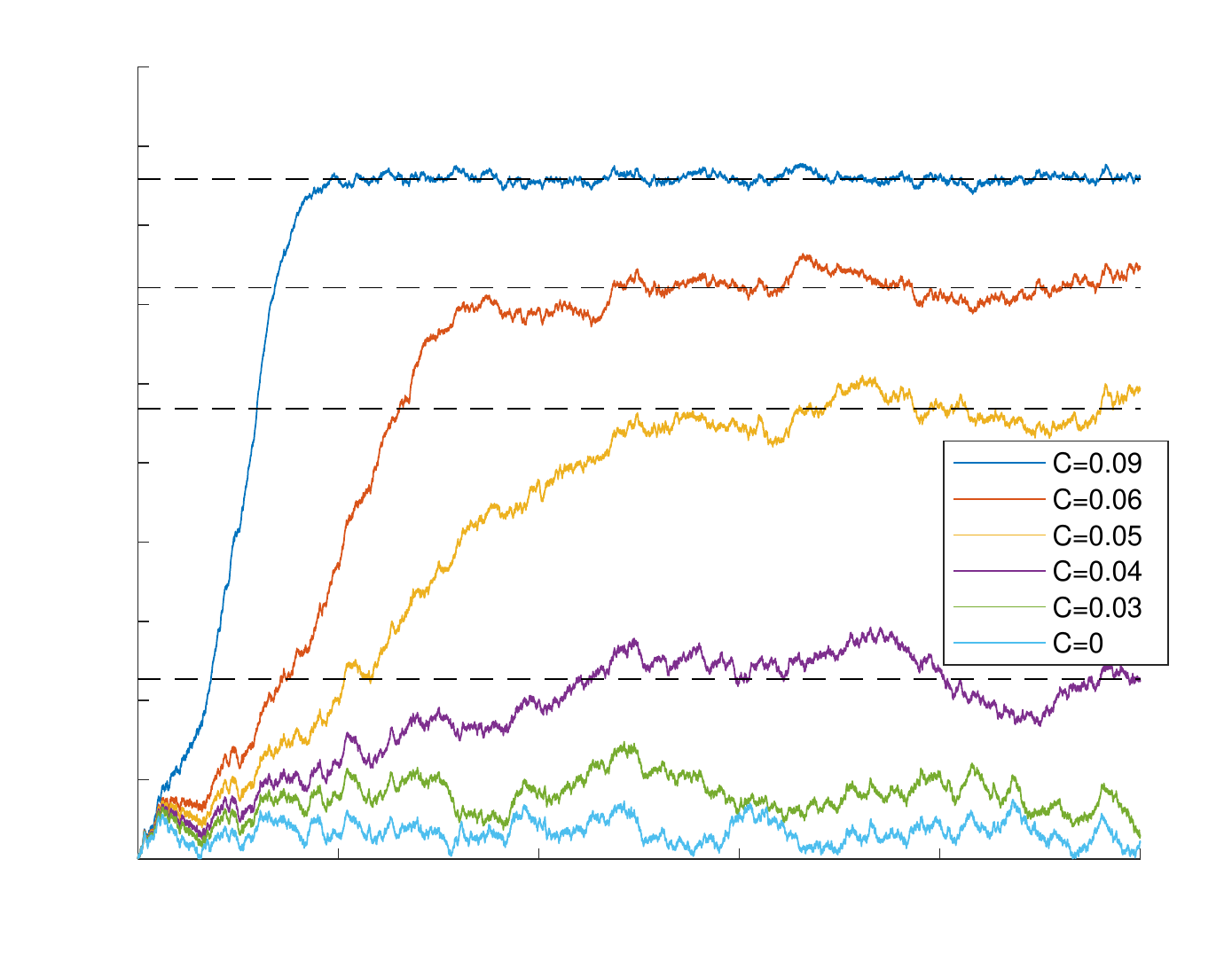}
        \put(8,71){\color{black}1}
        \put(8,8){\color{black}0}
        \put(4,40){\color{black}$r$}
        \put(87,4){\color{black}1000}
        \put(11,4){\color{black}0}
        \put(53,1){\color{black}$t$}
    \end{overpic}
    \caption{Time evolution of $r$ for numerical solutions of \eqref{eq: stoch kuramotos model on graphs 1} for different values of $C$.
    Clearly, the time traces of $r$ are subject to (random) fluctuations. Also, for higher $C$, the time traces of $r$ settle, after a transient, at some quasi-stationary state (dashed lines).
    Other parameters are $\beta=50$ and $N=1000$.}
    \label{fig:belowmidaboveCc}
\end{figure}

We want to check the prediction for the incoherence-coherence transition given in Theorem~\ref{prop: cha enrgy methods, incoherence-coherence transition} for the mean-field limit by numerical simulations. The challenges we face in doing so stem from the fact that numerical simulations are bound to a finite-dimensional representation of Eqs.~\eqref{eq: stoch kuramotos model on graphs} and to a finite simulation time. Thus, while Theorem~\ref{prop: cha enrgy methods, incoherence-coherence transition} can only hold in an  approximative sense for $N<\infty$, the finite system size and simulation time also incur uncertainty in the detection of the incoherence-coherence transition. Several points need to be taken into account when detecting the transition from incoherence to coherence that we outline below.

To see this, it is instructive to observe the dynamics for the case of the Kuramoto model where oscillators interact with $D(u)=\sin(u)$ on a complete graph with uniform coupling.
The collective dynamics of all oscillators is described by the order parameter $r(t)$ defined in~\eqref{eq:orderparameter} and is shown in Fig.~\ref{fig:belowmidaboveCc} for numerical solutions of \eqref{eq: stoch kuramotos model on graphs 1} for varying coupling strengths $C$ and fixed system size $N=1000$ and noise level $\beta=50$.
Initial phases are chosen to correspond to incoherent oscillations (see Sec.~\ref{numericalmethods} on numerical methods).
The dynamics of the order parameter $r$ is subject to fluctuations, which stems from two sources:
i) the stochastic dynamics inherent to the system and
ii) finite-size effects induce pseudo-random fluctuations
of order $\mathcal{O}(N^{-1/2})$ that vanish in the limit $N\rightarrow\infty$
\footnote{Finite size fluctuations are pseudo-random: finitely many oscillators move around the unit circle with distinct velocity differences and thus perpetually change their relative locations on the unit circle)}. After a transient time, $\Tt$, we observe that the dynamics settle into a quasi-stationary state (on average); i.e., the order parameter fluctuates around a constant mean value and is bounded by minimal and maximal values. If the trajectory after the transient attains a minimal value arbitrarily close to 0 during the observed time interval, we say that the population oscillates incoherently; if the minimal value never approaches 0, the dynamics are said to be (partially) coherent or synchronized (perfect synchrony occurs only for $r=1$), and we observe increasing synchrony for larger $C$.
Accordingly, Fig.~\ref{fig:belowmidaboveCc} allows us to distinguish incoherent oscillations for weak coupling strengths ($C=0$ to $C=0.03$), and partially coherent oscillations occur for stronger coupling ($C\geq 0.04$), which agrees well with the prediction of $ \Cst= 0.04$ given by \eqref{eq:Cst_classicalKM_uniform} for the continuum limit. For further details on the incoherence-coherence transition of the Kuramoto model, see also Ref.~\onlinecite{Strogatz2000}.

These observations point toward an implementation of numerical methods and measurements as outlined below.

\subsection{Numerical methods}\label{numericalmethods}
We calculate numerical solutions of \eqref{eq: stoch kuramotos model on graphs} with a first-order Euler-Maruyama scheme with a time step $\Delta t = 0.01$.
Initial conditions/phases correspond to low synchrony compliant with incoherence, i.e., either the equidistant state $\theta_k(0):=2\pi k/N$ (Uniform complete graph, Erdös-Rényi graph, regular ring lattice with $r=400$, spherical graph) or the random state where $\theta_k(0)$ (Regular ring lattice with $r=25$, sinuisodal graph, Lorentzian graph) is drawn from the uniform distribution on the interval $[0,2\pi)$ (two types of initial conditions were chosen since other attracting states were present for the regular ring lattice with $r=25$).
To characterize the post-transient dynamics, we use the order parameter
$r(t) = \left|\frac{1}{N}\sum_{j=1}^Ne^{\im\theta_j(t)}\right|$ in Eq.~\eqref{eq:orderparameter}
and measure its temporal minimum and maximum, as well as its time average,
\begin{align}
    \rmin&:=\min_{t\in\mathcal{T}}(r(t)),\\
    \rmax&:=\max_{t\in\mathcal{T}}(r(t)),\\
    \rmean&:=|\mathcal{T}|^{-1}\int_\mathcal{T}r(t)\txtd t,\
\end{align}
where $\mathcal{T}:=[\Tt, T]$ with $\Tt$ being the (estimated) transient time and $T$ the total length of the simulation.

To average over stochastic effects, such as Brownian motion and random graphs (Erdös-Rényi and small-world), we average these measurements over several realizations of solutions of~\eqref{eq: stoch kuramotos model on graphs 1} (i.e., ten realizations to account for Brownian motion for eight (random) graph realizations)
and denote ensemble averages with angular brackets $\langle\cdot \rangle$.
To numerically test Theorem~\ref{prop: cha enrgy methods, incoherence-coherence transition}, we calculate $\rminRZ$, $\rmeanRZ$, and $\rmaxRZ$ for different values of $C$ and compare the resulting curves with $\Cst$. The sampling points for the coupling $C$ are non-uniformly spaced with a higher density in regions of interest (indicated as blue dots in Fig.~\ref{fig:incoh2coh}).

A suitable transient time $\Tt$ can be determined based on the following considerations.
The actual transient is maximal for $C=\Cst$ and decreases for $C>\Cst$; see Fig.~\ref{fig:belowmidaboveCc}.
One could estimate $\Tt$ for each value of $C$ individually to optimize for computational effort; but for simplicity, we estimated the length of $\Tt$ only at $C=\Cst$ and used this $\Tt$ for all probed values of $C$, as this choice guarantees a sufficiently long transient time.
Due to the fluctuations present in the signal of $r(t)$ (pseudo-random fluctuations and stochastic noise),  the estimation of $\Tt$ is heuristic; i.e., it is done by visual inspection. This estimate of $\Tt$ improves with increasing $N$ as the amplitude of (pseudo-random) fluctuations decreases. Taking these considerations into account, we chose $\Tt=700,T=1000,\Delta t=0.01,N=1000$ for all our numerical solutions of \eqref{eq: stoch kuramotos model on graphs 1}.

\begin{figure*}[htp!]
\newcommand{\axx}{2}    
\newcommand{\axyb}{6}
\newcommand{\axyt}{66}
\newcommand{\axy}{1}    
\newcommand{\axxl}{6}
\newcommand{\axxm}{39.5}
\newcommand{\axxr}{88}
\newcommand{\axxlabelposx}{50} 
\newcommand{\axxlabelposy}{-3}
\newcommand{\axylabelposx}{-1} 
\newcommand{\axylabelposy}{22}
\newcommand{\axlabelposx}{-3} 
\newcommand{\axlabelposy}{72}
\newcommand{\plx}{40}
\newcommand{\ply}{72}
\begin{overpic}[width=0.32\linewidth,percent]{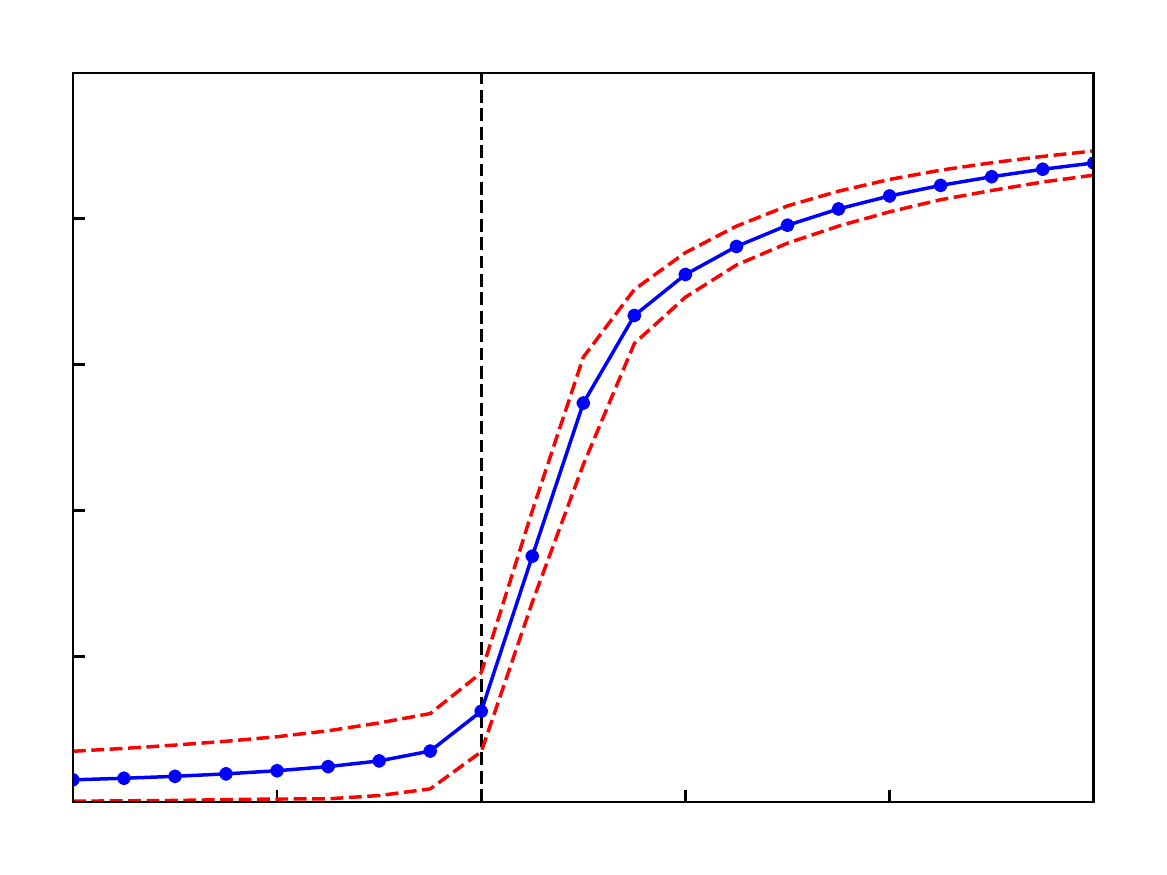}
\put(\axx,\axyt){\color{black}1}
\put(\axx,\axyb){\color{black}0}
\put(\axylabelposx,\axylabelposy){\color{black}\rotatebox{90}{Order parameter}}
\put(\axxl,\axy){\color{black}0}
\put(\axxm,\axy){\color{black}1}
\put(\axxr,\axy){\color{black}2.5}
\put(\axxlabelposx,\axxlabelposy){\color{black}$C/\Cst$}
\put(\axlabelposx,\axlabelposy){(a)}
\put(40,\ply){Uniform complete}
\end{overpic}
\begin{overpic}[width=0.32\linewidth,percent]{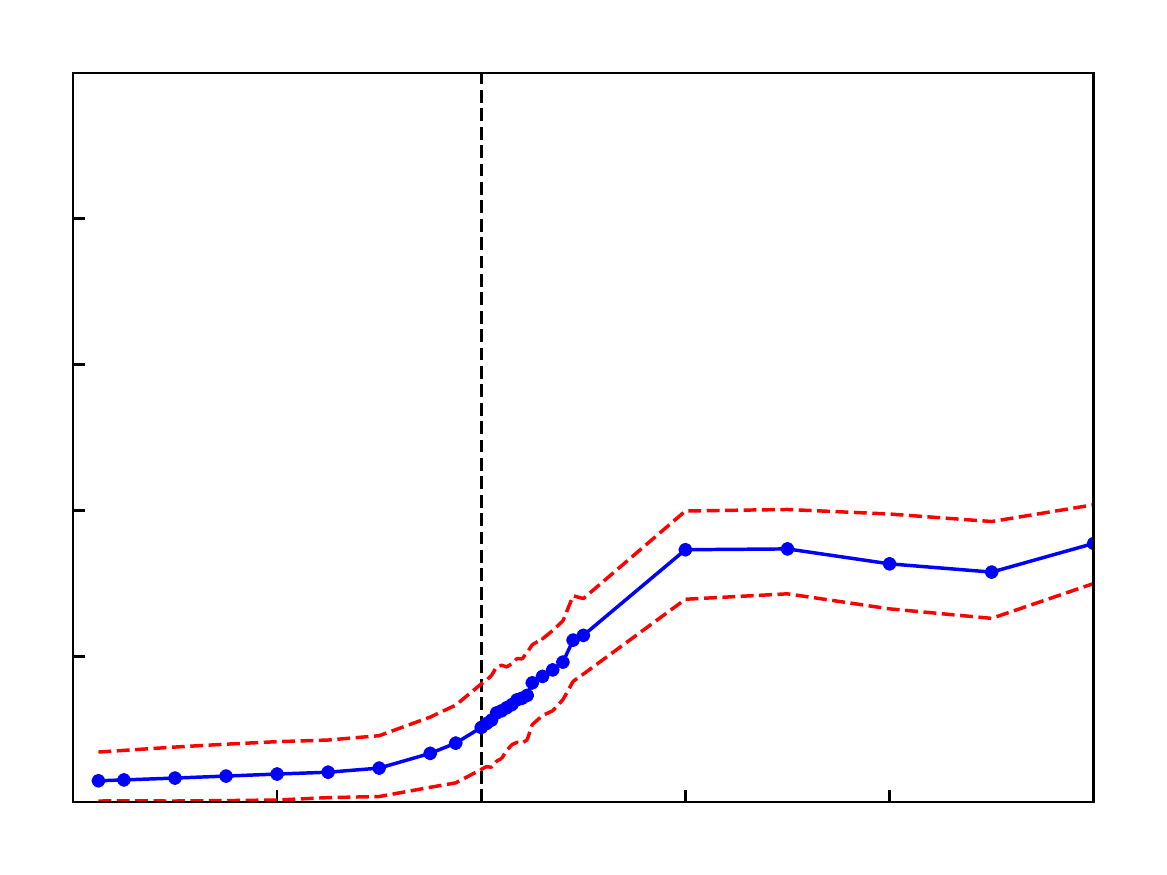}
\put(\axx,\axyt){\color{black}1}
\put(\axx,\axyb){\color{black}0}
\put(\axylabelposx,\axylabelposy){\color{black}\rotatebox{90}{Order parameter}}
\put(\axxl,\axy){\color{black}0}
\put(\axxm,\axy){\color{black}1}
\put(\axxr,\axy){\color{black}2.5}
\put(\axxlabelposx,\axxlabelposy){\color{black}$C/\Cst$}
\put(\axlabelposx,\axlabelposy){(b)}
\put(20,\ply){Regular ring lattice ($r=25$)}
\end{overpic}
\begin{overpic}[width=0.32\linewidth,percent]{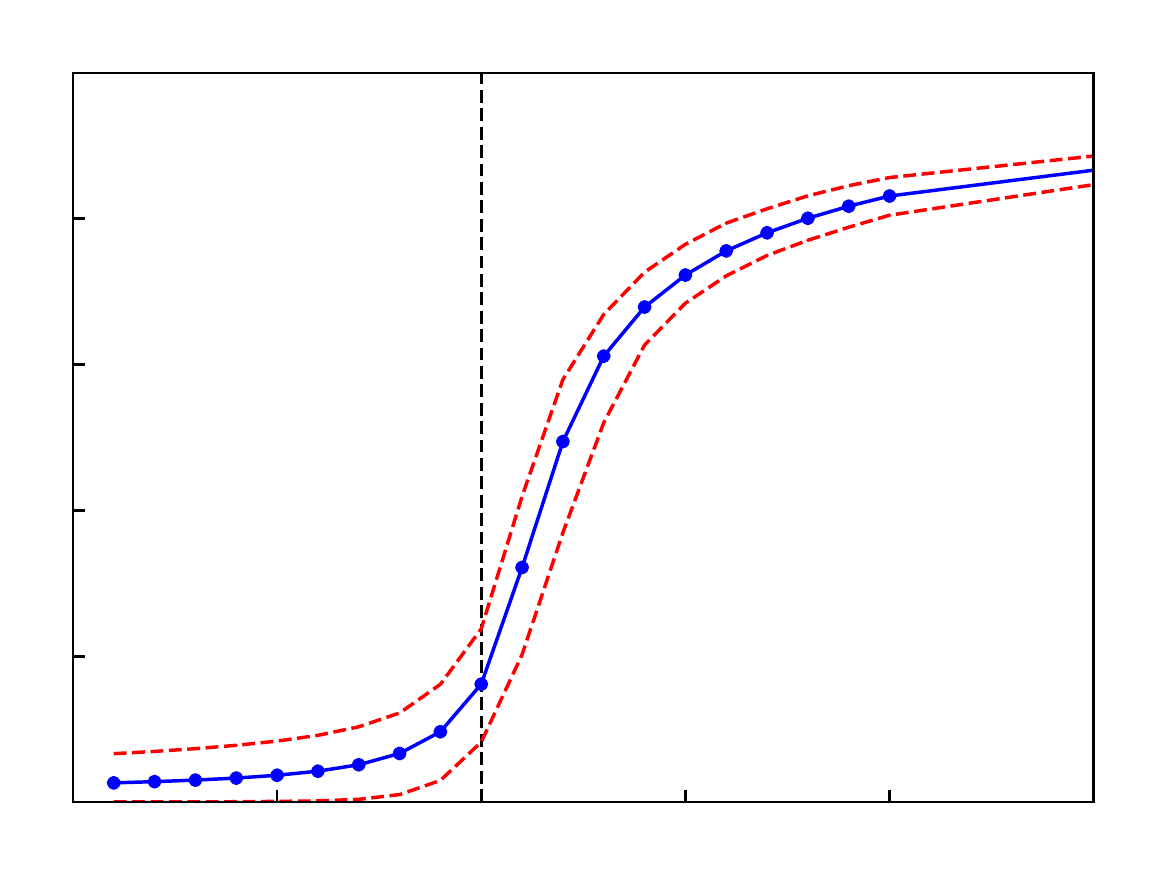}
\put(\axx,\axyt){\color{black}1}
\put(\axx,\axyb){\color{black}0}
\put(\axylabelposx,\axylabelposy){\color{black}\rotatebox{90}{Order parameter}}
\put(\axxl,\axy){\color{black}0}
\put(\axxm,\axy){\color{black}1}
\put(\axxr,\axy){\color{black}2.5}
\put(\axxlabelposx,\axxlabelposy){\color{black}$C/\Cst$}
\put(\axlabelposx,\axlabelposy){(c)}
\put(20,\ply){Regular ring lattice ($r=400$)}
\end{overpic}
\\\medskip\medskip
\begin{overpic}[width=0.32\linewidth,percent]{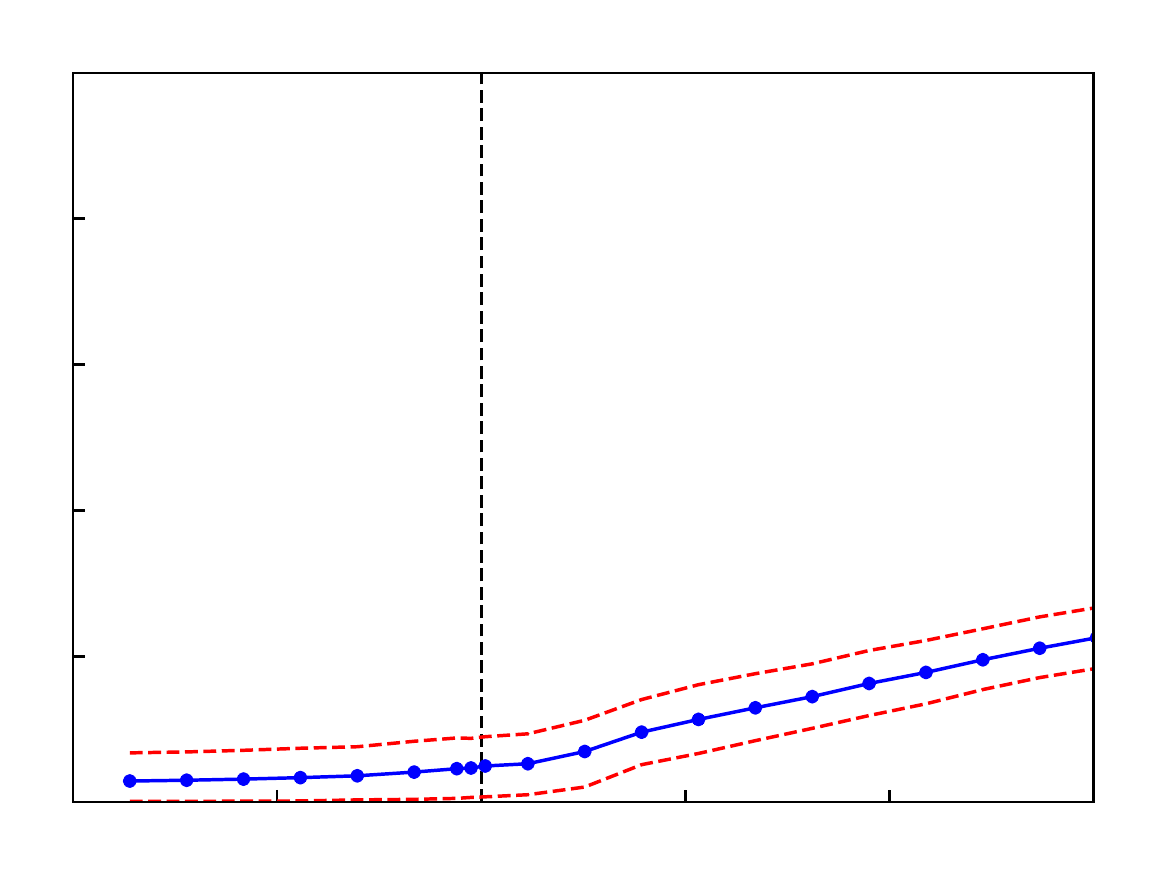}
\put(43,30){\includegraphics[scale=0.24]{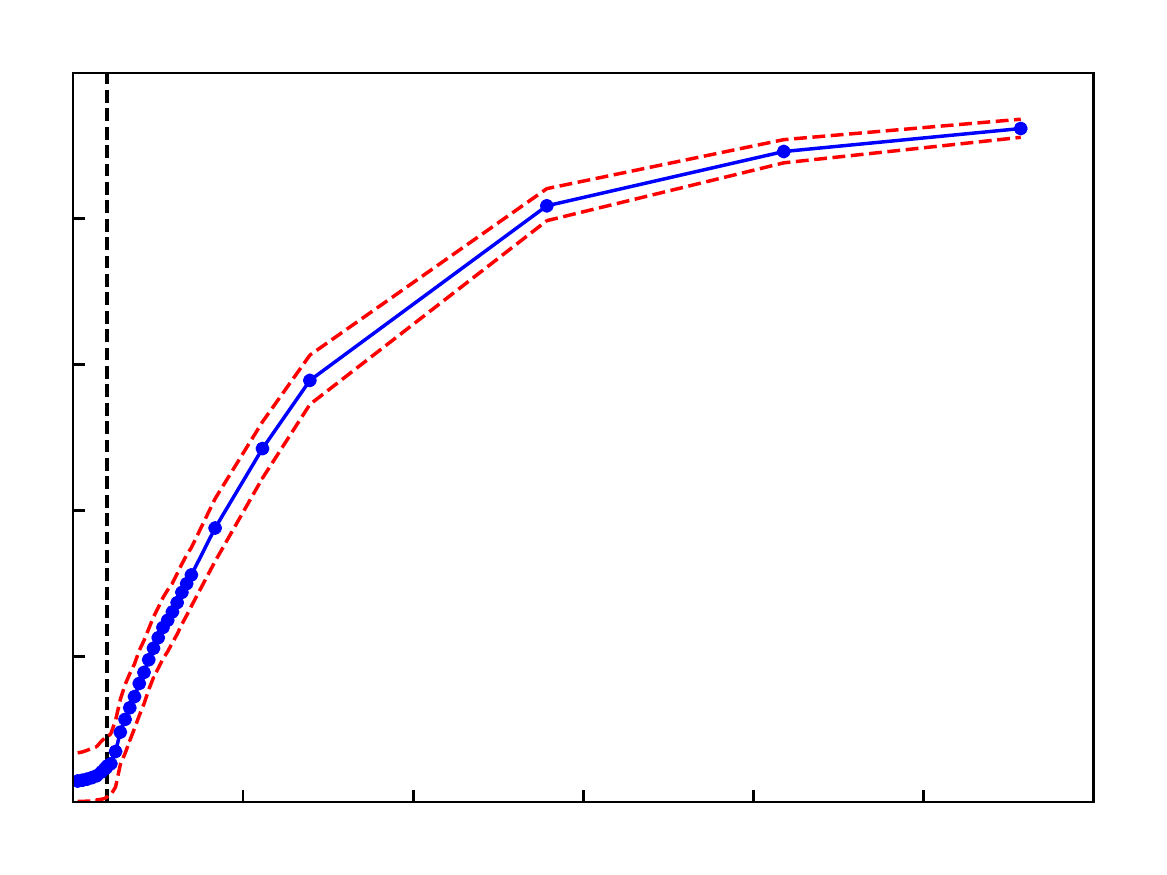}}
\put(46,28){0}
\put(85,28){30}
\put(42,33){0}
\put(42,60){1}
\put(\axx,\axyt){\color{black}1}
\put(\axx,\axyb){\color{black}0}
\put(\axylabelposx,\axylabelposy){\color{black}\rotatebox{90}{Order parameter}}
\put(\axxl,\axy){\color{black}0}
\put(\axxm,\axy){\color{black}1}
\put(\axxr,\axy){\color{black}2.5}
\put(\axxr,\axy){\color{black}2.5}
\put(\axxlabelposx,\axxlabelposy){\color{black}$C/C^{\natural,N}$}
\put(\axlabelposx,\axlabelposy){(d)}
\put(30,\ply){Lorentzian ($\mu=0.01$)}
\end{overpic}
\begin{overpic}[width=0.32\linewidth,percent]{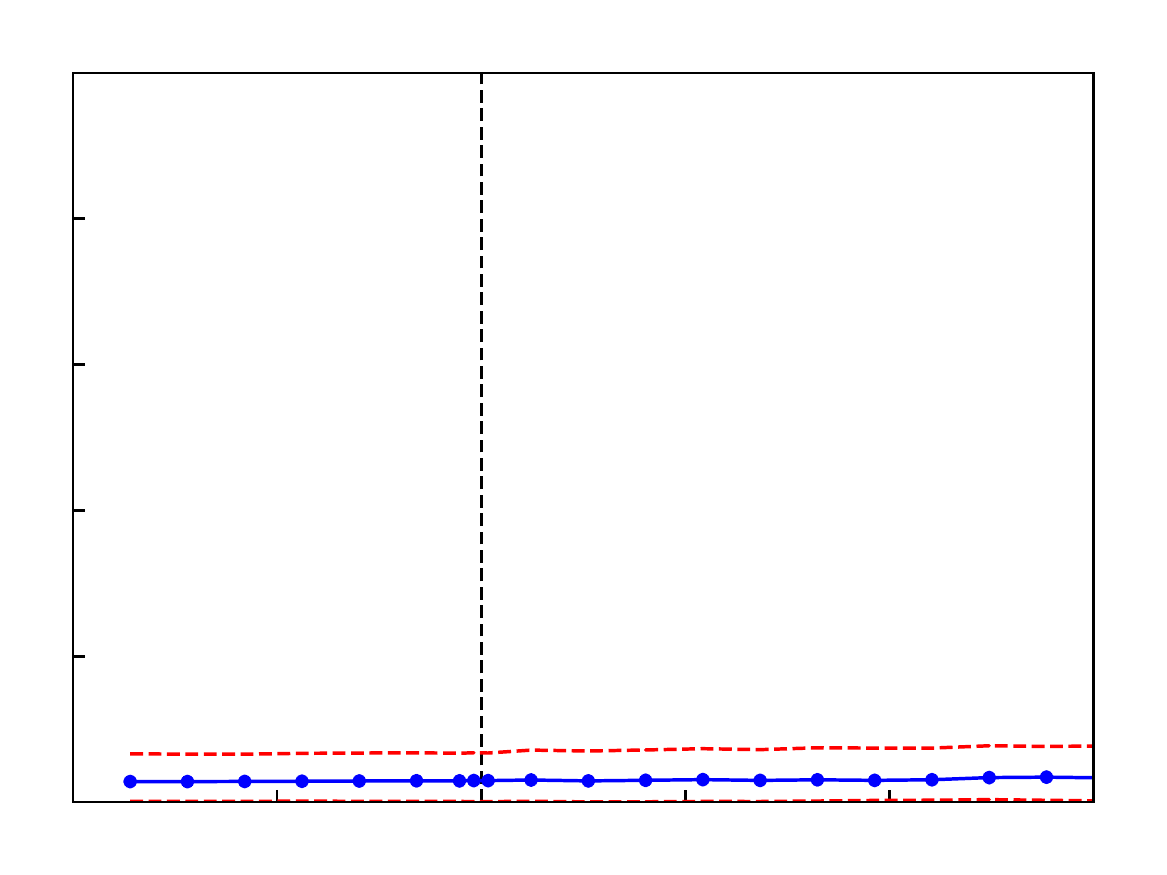}
\put(43,30){\includegraphics[scale=0.24]{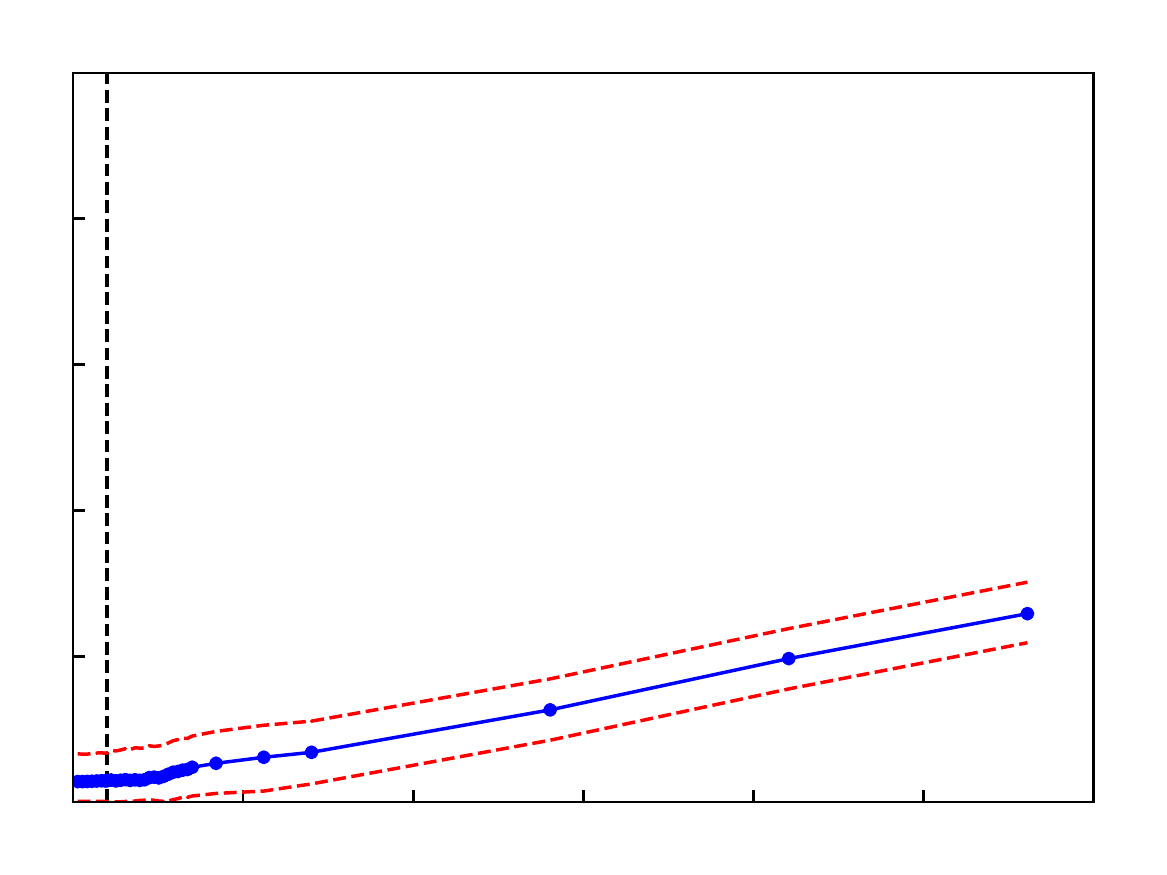}}
\put(46,28){0}
\put(85,28){30}
\put(42,33){0}
\put(42,60){1}
\put(\axx,\axyt){\color{black}1}
\put(\axx,\axyb){\color{black}0}
\put(\axylabelposx,\axylabelposy){\color{black}\rotatebox{90}{Order parameter}}
\put(\axxl,\axy){\color{black}0}
\put(\axxm,\axy){\color{black}1}
\put(\axxr,\axy){\color{black}2.5}
\put(\axxlabelposx,\axxlabelposy){\color{black}$C/C^{\natural,N}$}
\put(\axlabelposx,\axlabelposy){(e)}
\put(30,\ply){Lorentzian ($\mu=0.001$)}
\end{overpic}
\begin{overpic}[width=0.32\linewidth,percent]{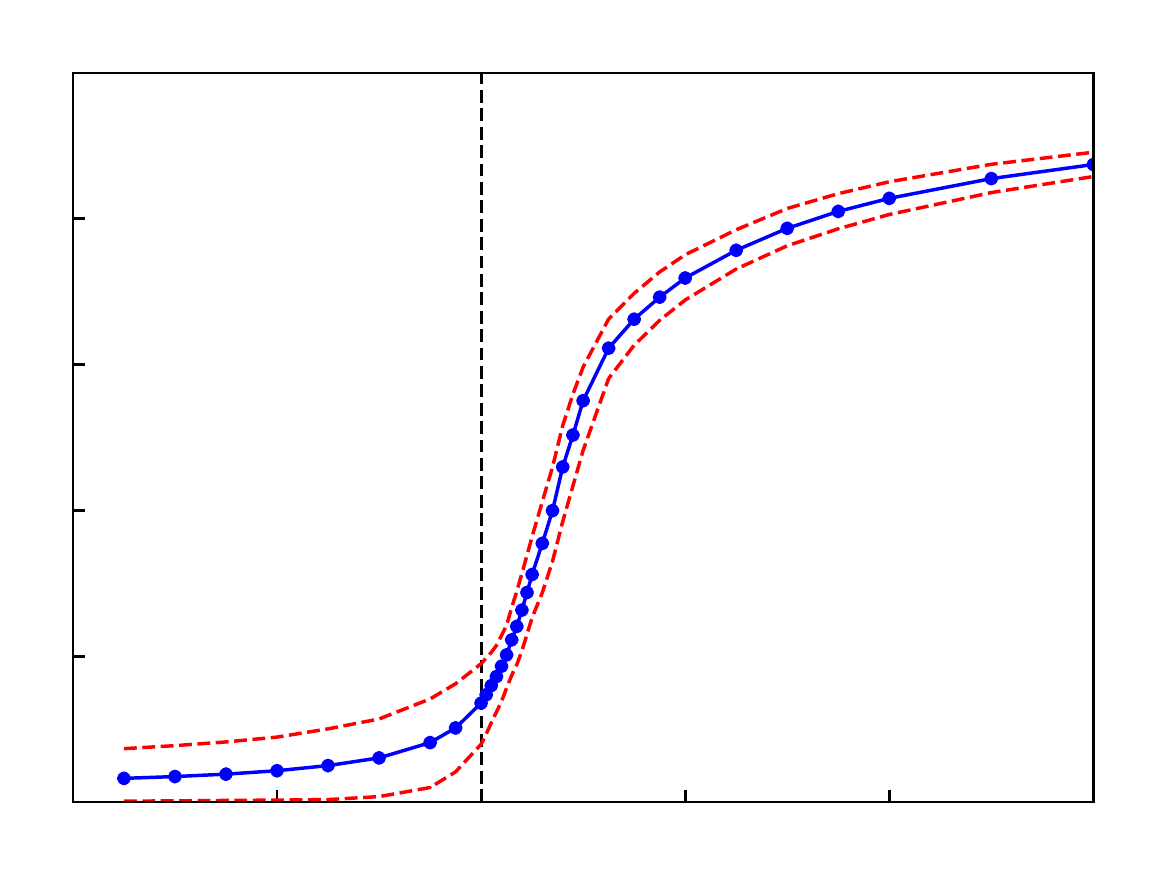}
\put(\axx,\axyt){\color{black}1}
\put(\axx,\axyb){\color{black}0}
\put(\axylabelposx,\axylabelposy){\color{black}\rotatebox{90}{Order parameter}}
\put(\axxl,\axy){\color{black}0}
\put(\axxm,\axy){\color{black}1}
\put(\axxr,\axy){\color{black}2.5}
\put(\axxlabelposx,\axxlabelposy){\color{black}$C/C^{\natural,N}$}
\put(\axlabelposx,\axlabelposy){(f)}
\put(40,\ply){Spherical}
\end{overpic}
\\\medskip\medskip
\begin{overpic}[width=0.32\linewidth,percent]{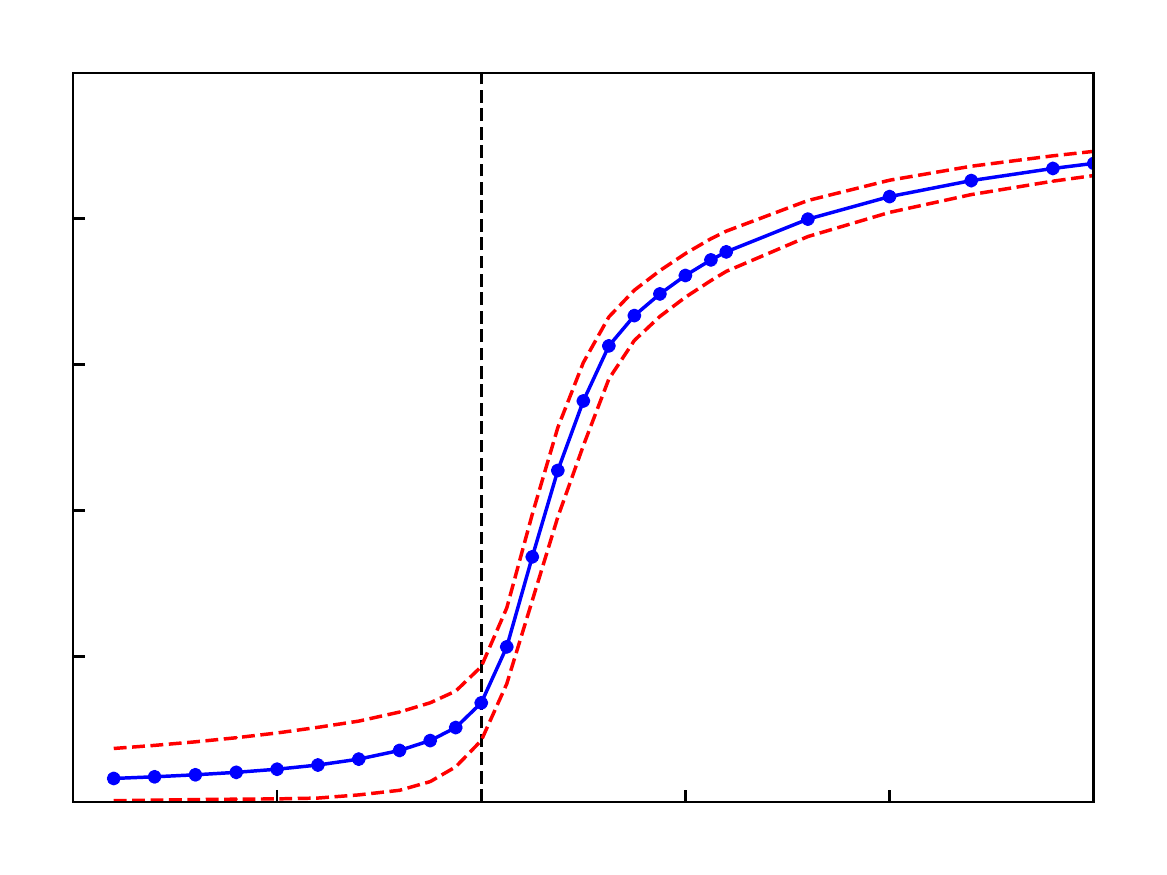}
\put(\axx,\axyt){\color{black}1}
\put(\axx,\axyb){\color{black}0}
\put(\axylabelposx,\axylabelposy){\color{black}\rotatebox{90}{Order parameter}}
\put(\axxl,\axy){\color{black}0}
\put(\axxm,\axy){\color{black}1}
\put(\axxr,\axy){\color{black}2.5}
\put(\axxlabelposx,\axxlabelposy){\color{black}$C/\Cst$}
\put(\axlabelposx,\axlabelposy){(g)}
\put(35,\ply){Small world}
\end{overpic}
\begin{overpic}[width=0.32\linewidth,percent]{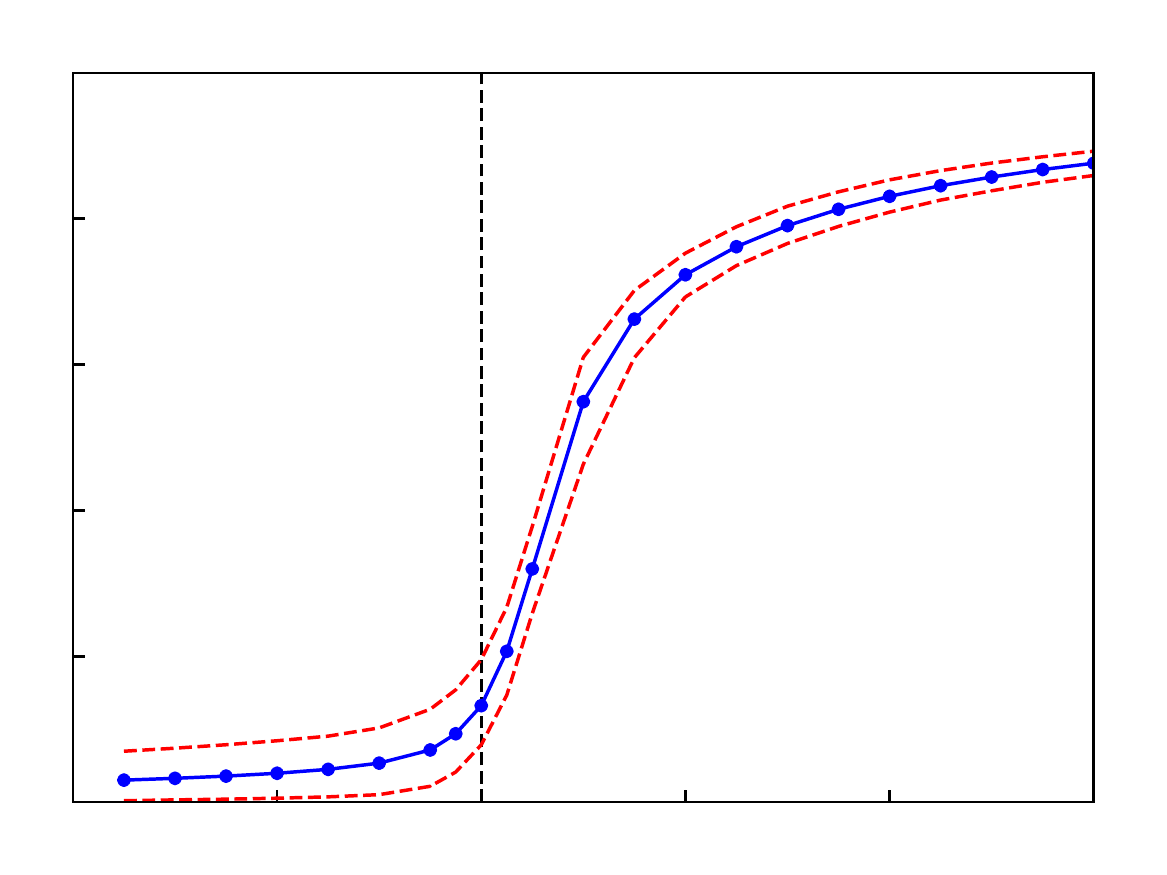}
\put(\axx,\axyt){\color{black}1}
\put(\axx,\axyb){\color{black}0}
\put(\axylabelposx,\axylabelposy){\color{black}\rotatebox{90}{Order parameter}}
\put(\axxl,\axy){\color{black}0}
\put(\axxm,\axy){\color{black}1}
\put(\axxr,\axy){\color{black}2.5}
\put(\axxlabelposx,\axxlabelposy){\color{black}$C/\Cst$}
\put(\axlabelposx,\axlabelposy){(h)}
\put(40,\ply){Sinusoidal}
\end{overpic}
\begin{overpic}[width=0.32\linewidth,percent]{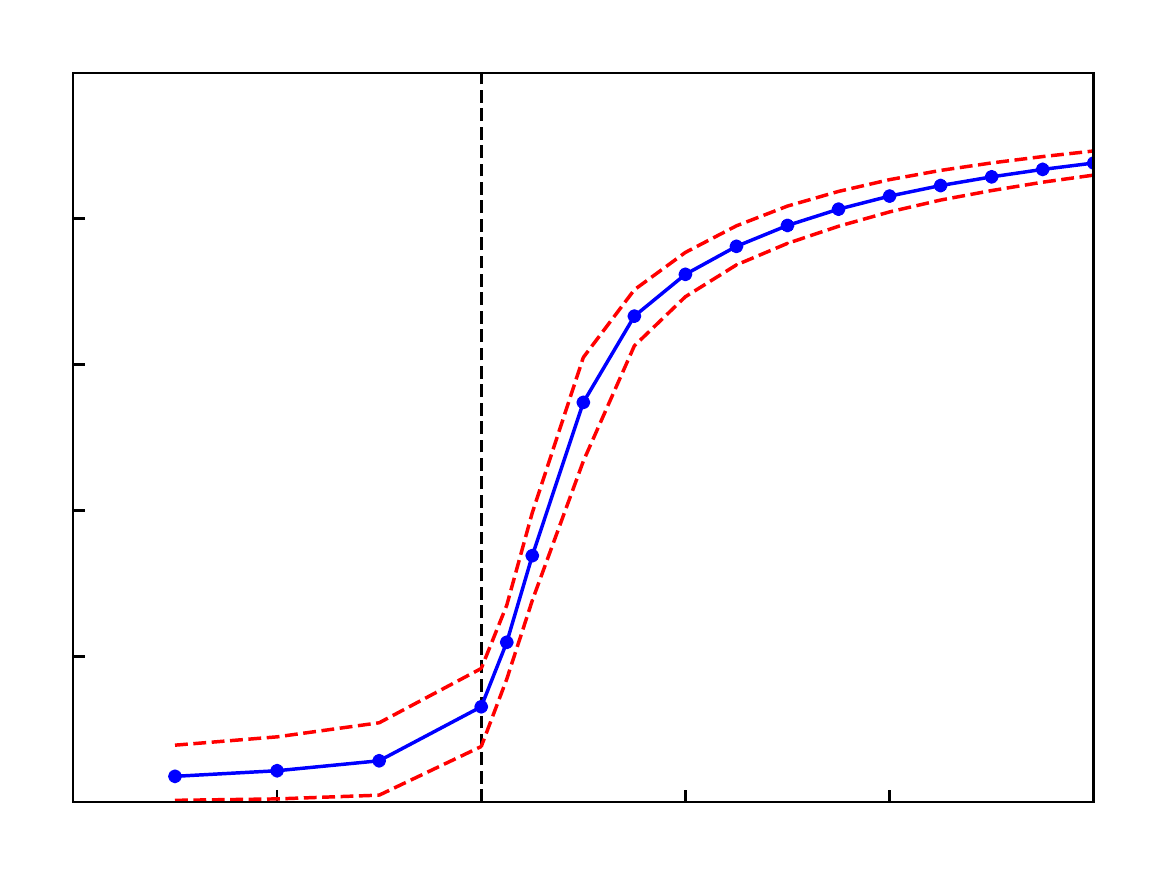}
\put(\axx,\axyt){\color{black}1}
\put(\axx,\axyb){\color{black}0}
\put(\axylabelposx,\axylabelposy){\color{black}\rotatebox{90}{Order parameter}}
\put(\axxl,\axy){\color{black}0}
\put(\axxm,\axy){\color{black}1}
\put(\axxr,\axy){\color{black}2.5}
\put(\axxlabelposx,\axxlabelposy){\color{black}$C/\Cst$}
\put(\axlabelposx,\axlabelposy){(i)}
\put(30,\ply){Erdös-Rényi}
\end{overpic}
\\\medskip
  \caption{
  Incoherence-coherence transition for numerical solutions of \eqref{eq: stoch kuramotos model on graphs 1} for different coupling topologies. The dashed black lines show the respective value of $\Cst$. The red dashed lines show $\rminRZ$ and $\rmaxRZ$, while the blue lines show $\rmeanRZ$.
    a) Uniform complete graph, $\Cst=0.08$.
    b) Regular ring lattice with $r=25,\Cst=0.8$.
    c) Regular ring lattice with $r=400,\Cst=0.05$.
    d) Lorentzian graph with $\mu=0.01,C^{\natural,N}=0.0718$.
    e) Lorentzian graph with $\mu=0.001,C^{\natural,N}=0.0713$.
    f) Spherical graph with $M=50,C^{\natural,N}=0.8003$.
    g) Small-world graph with $k=100,p=0.5,\Cst=0.2$.
    h) Sinusoidal graph, $\Cst=0.08$.
    i) Erdös-Rényi graph with $p=0.5,\Cst=0.08$.
    Parameters for all graphs: $N=1000,\Delta t=0.01,\Tt=700,T=1000,\beta=50,$ ten realizations of Brownian motion, eight realizations of the graph (if random). $C^{\natural,N}$ is evaluated when $\Cst$ cannot be evaluated.
  }\label{fig:incoh2coh}
\end{figure*}

\subsection{Graph(on) topologies and their associated incoherence-coherence transitions}
We now define different graph structures for which we carry out numerical simulations to test for the onset of the incoherence-coherence transition.
Results for the incoherence-coherence transitions for the various graph topologies are summarized in Fig.~\ref{fig:incoh2coh}.
\subsubsection{Incoherence-Coherence threshold for finite and infinite oscillator systems}
We extend our analysis to different coupling topologies while using the coupling interaction  $D(u)=\sin(u)$.
The theoretically predicted threshold for the  incoherence-coherence transition, $\Cst$, valid in the mean-field limit is calculated using \eqref{eq:Cst_classicalKM}.
We shall compare the numerical findings to this theoretical prediction for coupling topologies where it is possible,
However, for certain graphops $A$, a characterization of $\sigma(A)$ exceeds the scope of this study (Spherical graph in Sec.~\ref{sec:spher}; Lorentzian graph in Sec.~\ref{sec:lore}). In such cases, we instead compute the eigenvalues $\sigma^N(A^N)$ of a discrete coupling matrix $A^N$ that approximates $A$.
We expect that the finite-dimensional matrices $A^N$ can be used to provide an approximation to (at least the boundary of) the spectrum of the limiting graphop $A$ as $N\ti$, and therefore, $\Cst$ can be approximated by its discrete corollary
\begin{align}
C^{\natural,N} &= \frac{2}{\beta\Lambda^N(A^N)},
\label{eq:Cst_Nfinite}
\end{align}
where $\Lambda^N(A^N):=\max_{\lambda\in\sigma^N(N^{-1}A^N)}|\lambda|$ is the maximal eigenvalue associated with $A^N$.
Finally, we also mention the possibility of ``spectral pollution,''~\cite{davies2004spectral} which, in principle, can occur when numerically approximating the spectrum of an operator with finite-dimensional matrices. However, as we shall see, our numerical and analytical results are consistent; we, therefore, anticipate that the numerical calculations are sufficiently stable.

\subsubsection{Coupling topologies}\label{sec:topo}
\paragraph{Regular ring lattice with $r$ neighbors.}
Nodes for this coupling topology may be imagined to be arranged on a ring such that every node is linked to a given number of $r$ nearest neighbors.
In the continuum limit $N\ti$ , the ring lattice graphon can be defined as
\begin{align}\label{eq:Kregularringlattice}
K(x,y) = \begin{cases}
1, & \min \{|x-y|, 1 -|x-y|\} \leq h,\\
0 & \text{otherwise},
\end{cases}
\end{align}
where $0\leq h\leq1/2$ is the (continuous) coupling range for oscillators located at $x$ and $y$ on $\Omega$.
The graphop $A$ defined via this graphon kernel $K$ has $\Lambda(A)=2h$ (this can be shown, e.g., by writing $K(x,y)$ as a Fourier series, and the values of $\sigma(A)$ are given in Ref.~\onlinecite{Gao2019}.)

For $N<\infty$, we simply define the regular ring lattice graph via
\begin{align}
A^N_{kj} =  \begin{cases}
1, & k\neq j \text{ and } \min\{|k-j|, 1-|k-j|\}\leq r,\\
0 & \text{else},
\end{cases}
\end{align}
where the (discrete) coupling range $r=r(h)\in[N]$ for oscillators located at $k$ and $j$ in $[N]$ satisfies $0\leq r\leq N/2$ with $N$ even.
It is easy to check that $\Lambda^N(A^N)=2r/N$. In our simulations, we choose $r$ from which the value $h$ for the corresponding graphon kernel follows via $h=r/N$. We then have $\Lambda(A)=\Lambda^N(A^N)=2r/N$.
We note two limiting cases; namely, we obtain all-to-all coupling for $h=1/2$ and zero coupling for $h=0$.

\paragraph{Erdös-Rényi graph.}
The Erdös-Rényi (ER) graph(on) is constructed in a random process where the presence (or absence) of every edge (of the complete graph) is chosen with a probability $p\in[0,1]$.

In the continuum limit, $N\ti$, the Erdös-Rényi graphop simply becomes the complete (all-to-all) graphop with constant uniform coupling strength  $p$; i.e., the corresponding graphon kernel is  $K(x,y)= p$; see Ref.~\onlinecite{Medvedev2014c}.
It follows then that $\Cst=2/(\beta p)$.

For finite oscillators $N<\infty$, a realization of the ER graph on $N$ nodes may be obtained by drawing $N(N-1)/2$ random numbers $a_{kj},\,\, 1\leq k < j\leq N$, from the uniform distribution on the interval $[0, 1]$. The adjacency matrix of the graph is then
\begin{align}
A_{kj}^N = A_{jk}^N = \begin{cases}
1, & k<j,\,\,\,a_{kj}\leq p,\\
0 & \text{else}.
\end{cases}
\end{align}
For $p=1$, we obtain all-to-all coupling with uniform coupling strength 1 (complete graph), while $p=0$ yields zero coupling.

\paragraph{Small-world graph.}
The small world (SW) graph~\cite{WattsStrogatz1998}
interpolates between a regular ring lattice and a ER graph structure,
thus creating a topology that is quite regular but also entertains random links across the network. This structure results in short path lengths even when nodes are far away on the ring.

For finite graphs, $N\ti$, the small world graphop $A$ can be constructed via the graphon kernel~\cite{Medvedev2014a,Medvedev2014b} given by
\begin{align}
    K(x,y) &= (1-p)W(x,y) + 2ph,\
\end{align}
where
\begin{align}
W(x,y) &= \begin{cases}
1, & \min \{|x-y|, 1 -|x-y|\} \leq h,\\
0 & \text{else},
\end{cases}
\end{align}
with (continuous) coupling range $0\leq h\leq1/2$ (note that $W(x,y)$ is identical to $K(x,y)$ in \eqref{eq:Kregularringlattice} further above for the regular ring lattice). It can be shown that $\Lambda(A)=2h$ (to see this, one needs to write  $K(x,y)$ in terms of a Fourier series; the values of $\sigma(A)$ are given by Gao and Caines~\cite{Gao2019}).

For $N<\infty$, realizations of the SW graph on $N$ nodes may be obtained via the procedure introduced by Watts and Strogatz~\cite{WattsStrogatz1998}: One starts with a regular ring lattice on $N$ nodes with $r$ nearest neighbors (discrete coupling range). One selects a  constant probability $p\in[0,1]$. For each node $k$ and each link between $k$ and its $r$ nearest neighbors to the right, we draw a random number $X\in[0,1]$ i.i.d. from the uniform distribution. If $X\leq p$, we draw a random integer $j$ from the uniform distribution on $[N]$. If $k\neq j$ and the edge $(k,j)$ does not yet exist, it is created and the old link deleted.

In our numerical setting, we simply pick a value $r$ and the value $h$ for the corresponding graphon kernel follows from $h=r/N$. We numerically confirmed that $\Lambda^N(A^N)\approx\Lambda(A)=2h$. We shall thus use the value $C^{\natural,N}\approx\Cst=1/(\beta h)$.

\paragraph{Spherical graph.}\label{sec:spher}
The action of the spherical graphop $A:L^2(\mathbb{S}_2)\mapsto L^2(\mathbb{S}_2)$ on a function $f$ is defined by
\begin{align}
(Af)(x) &= \int_{y\perp x}f(y)\txtd\nu_x,\label{eq: def sphergrp}
\end{align}
where $\nu_x$ is the uniform measure.
The spherical graphop thus integrates $f$ over the circle on the unit sphere that consists of all the points perpendicular to $x$. The resulting circle is the \textit{equator} of the point $x$.
The spherical graphop does not have a graphon kernel or a known spectrum; therefore, we need to calculate $\Cst$ via \eqref{eq:Cst_Nfinite}.
Moreover, a matrix approximation to the spherical graphop has to our knowledge not yet been proposed. Here, we propose a possible approximation without claiming any convergence properties as $N\ti$.
Choosing $N$ (approximately equidistant) sample points $x_1,\ldots,x_N$ on the unit sphere, we may obtain a matrix $A^N$ approximating $A$ by defining $A^N_{kj}=A^N_{jk}=1$ if $x_k$ and $x_j$ are approximately perpendicular; otherwise, $A^N_{kj}=A^N_{jk}=0$.
The discretized version of \eqref{eq: def sphergrp} then reads
\begin{align}
    (A^Nf)_k = \frac{1}{N}\sum_{j=1}^NA^N_{kj}f(x_{j}).
\end{align}
We refer to $A^N$ as a \textit{spherical graph}.
Three requirements should be made on $A^N$. For each point $x_k$, the points $x_j$ for which $A^N_{kj}=1$ should (i) lie sufficiently close to the equator of $x_k$, (ii) be sufficiently equidistant, and (iii) be (almost) equally many for all $k$.
Clearly, if we take an arbitrary point on the unit sphere, one can place $M$ perfectly equidistant points on its equator.
However, (i) and (ii) must be fulfilled reasonably well for \textit{all} $N$ points and their respective equators. Therefore, the points should form a regular grid. While a perfectly regular grid of $N>6$ points on the sphere is impossible, there exist approximately regular grids~\cite{Kogan2017}. Here, we place the points in a spiral of width $0.1+1.2N$ around the sphere, starting and ending (approximately) on the poles. This method is implemented in the Mathematica Software package~\cite{reference_wolfram_2021_spherepoints}.
We denote the set of points with this spacing on the unit sphere as $P$.
The task is to determine subsets $E_k\subset P, 1\leq k\leq N$, such that each $E_k$ discretizes the equator of $x_k$. To this end, we first calculate $p_{kj}:=|\langle x_k,x_j \rangle|$ for each $1\leq k<j\leq N$, to determine how close the pairs of points are to being perpendicular.
Then, we specify $M$, the desired (approximate) cardinality of all $E_k$'s.
Now, we can, for each $k$, find the (approximately) $M$ points $x_j$ with the smallest values of $p_{kj}$ and make them members of $E_k$,
under the constraint that if $x_k\in E_j$, then $x_j\in E_k$, to ensure that $A^N$ is symmetric.
We end up with a (symmetric) $A^N$ that fulfills demands (i) and (ii) in an acceptable manner, while demand (iii) is fulfilled well: $|E_k|$ is either $M$ or $M-1$ for all $1\leq k\leq N$ (see Fig.~\ref{sphergrp}).
\begin{figure}
\centering
\newcommand{\axxtwo}{14}
\newcommand{\axytwo}{2}
  \begin{overpic}[width=\columnwidth,percent]{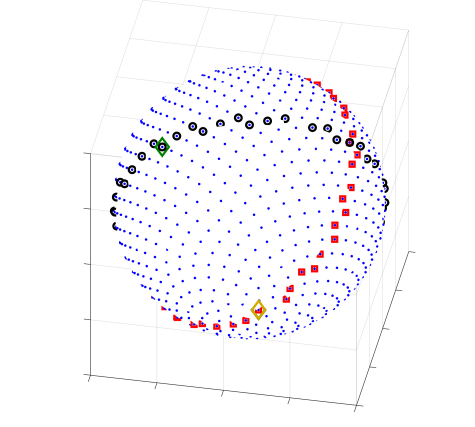}
    \put(16,6){\color{black}-1}
    \put(44.5,2){\color{black}0}
    \put(73,-1){\color{black}1}
    \put(40,0){\color{black}$x$}
    \put(\axxtwo,56){\color{black}1}
    \put(\axxtwo,33.5){\color{black}0}
    \put(\axxtwo,10){\color{black}-1}
    \put(10,37){\color{black}$z$}
    \put(93,36){\color{black}1}
    \put(88,23){\color{black}0}
    \put(81,9){\color{black}-1}
    \put(94,19){\color{black}$y$}
  \end{overpic}

  \caption{
  Matrix approximation for the spherical graphop. Blue dots indicate the sample points, black circles mark points belonging to the discretized equator of one of the exemplary points (yellow diamond), and red squares mark points belonging to the discretized equator of the other exemplary point (green diamond). The two exemplary points are thus members of each other's discretized equators. Parameters are  $N=1000$ and $M=50$.
  }
  \label{sphergrp}
\end{figure}

Since the spherical graphop cannot be defined via a graphon, we find $\Lambda^N(A^N)=0.04998$, and thus, $C^{\natural,N}=0.8003$.

\paragraph{Sinusoidal graph.}
In the sinusoidal coupling topology, nodes are coupled most strongly to their nearest neighbors, the coupling then smoothly decreases the farther neighbors are apart, finally the coupling is zero between nodes opposite on the ring.
We define the graphon kernel as
\begin{align}
K(x,y) = \frac{1}{2}(1+\cos2\pi(x-y)).
\end{align}
It can be shown~\cite{Gao2019} that the graphop $A$ induced by $K$ has $\Lambda(A)=1/2$ so that $\Cst = 4/\beta = 0.08$.

For $N<\infty$, we define the matrix $A^N$ by
\begin{align}
A^N_{k,j} = \frac{1}{2}\left(1+\cos{\left(2\pi\frac{k-j}{N}\right)}\right).
\end{align}

\paragraph{``Lorentzian'' graph.}\label{sec:lore}
We also consider graphs for which a mean-field description is more challenging, and which, therefore, could potentially fail to exhibit the behavior predicted by Theorem~\ref{prop: cha enrgy methods, incoherence-coherence transition}. A good candidate would be an irregular and sparse graph with few very strong links, while the vast majority of links is very weak. We can define such a topology based on the Lorentzian (graphon) kernel,
\begin{align}
\begin{split}
K(x,y)&= \frac{\mu/\pi}{(x-x_0)^2+(y-y_0)^2+\mu^2} \\&+ \frac{\mu/\pi}{(x-y_0)^2+(y-x_0)^2+\mu^2},
\end{split}
\end{align}
where $x_0,y_0\in[0,1]$. $K(x,y)$ peaks in the points $(x_0,y_0)$ and $(y_0,x_0)$, and converges to a sum of delta distributions centered at these points as $\mu\tz$.
We approximate this graphon in the finite representation as
\begin{align}
\begin{split}
A^N_{kj}&=\frac{\mu/\pi}{\left(\frac{k}{N}-x_0\right)^2+\left(\frac{j}{N}-y_0\right)^2+\mu^2}
\\
&
+
\frac{\mu/\pi}{\left(\frac{k}{N}-y_0\right)^2+\left(\frac{j}{N}-x_0\right)^2+\mu^2}.
\end{split}
\end{align}
We use the values $x_0=0.25,y_0=0.75$ with $\mu=0.01$ or $\mu=0.001$.
Computing the spectrum $\sigma(A)$ of the graphop $A$ defined by $K$ exceeds the scope of this work, and we use the eigenvalues of $A^N$,
$$
\Lambda^N(A^N)\evat{\mu=0.01}=0.5573,\;\;\;\Lambda^N(A^N)\evat{\mu=0.001}=0.5612,
$$
to obtain
$$
C^{\natural,N}\evat{\mu=0.01}=0.0718,\;\;\;C^{\natural,N}\evat{\mu=0.001}=0.0713.
$$


\section{Conclusion and Outlook}
\label{sec:discuss}
We formulated a mean-field theory for stochastic phase oscillator models with nontrivial coupling, i.e., heterogenous graph topologies and coupling weights. Our analysis for Kuramoto-type models with odd symmetric coupling functions, obtained via linearization around the incoherent solution branch, yields an exact formula for the critical coupling strength $ \Cst $ at the incoherence-coherence transition in the mean-field limit. Numerically integrating finite representations (see Eq.~\eqref{eq: stoch kuramotos model on graphs}) agrees very well with the predicted threshold $\Cst$ (Eq.~\eqref{eq:Cst_classicalKM}) for a wide range of heterogeneous graph structures (see Fig.~\ref{fig:incoh2coh})\footnote{Note that the regular ring lattice  with $N=1000$  displays imperfect synchronization ($0 \ll r \ll 1 $ ) for $r=25$ (Fig.~\ref{fig:incoh2coh} panel (b)), while $r=400$ a more regular emergence of (partial) coherence (Fig.~\ref{fig:incoh2coh}panel (c)); indeed, this case (with zero noise) is known to exhibit multistability between the coherent branch and a so-called twisted state as long as $r/N<0.34$~\cite{Wiley2006}.}.
We, therefore, expect our theory to be applicable to a large range of applications with heterogeneous oscillator interactions, such as systems with non-uniform coupling associated with chimera states~\cite{PanaggioAbrams2014} or $XY$-oscillators type models with random coupling~\cite{SherringtonKirkpatrick1975,HongMartens2022}.

For certain graph topologies characterized by strong sparsity, large variance in coupling strengths, or other types of ``clusterization'' implying  coupling fragmentation in the network, the mean-field description is expected to break down, in particular, also in terms of correctly predicting the incoherence-coherence transition for finite-size systems. We found that such a problem occurs at least for one instance, namely, for the Lorentzian graph topology (see Fig.~\ref{fig:incoh2coh} (d) and ~\ref{fig:incoh2coh}(e)), for which the detection of a sharp transition point numerically is difficult.
The Lorentzian graph is characterized by only a few nodes with very strong edge weights, while the vast majority of edge weights are very weak: the graph topology is effectively very sparse. This implies that we need a much larger $C$ to observe coherent oscillations.
As becomes apparent from comparing panels (d) and (e), the different quality of the incoherence-coherence transition between the Lorentzian and the other graphs considered is especially pronounced as the effective sparsity increases ($\mu\rightarrow 0$).
Note that not merely larger overall coupling strength $C$ is needed to achieve (partial) coherence, when compared to other topologies; if that were the case, one would just observe larger $\Cst$ for the Lorentzian graph as compared to the other graphs, and the coherence onset would still set in at $C=\Cst$. Rather, the onset of coherence appears to be delayed beyond $C=\Cst$ so that the increase of partial synchrony sets in very slowly as $C$ increases. This observation becomes especially pronounced for very small $\mu$ so that the coupling kernel becomes effectively very sparse. Thus, the Lorentzian graph represents an interesting coupling topology that demarcates a possible class of graphs for which --- at least for certain values of $\mu$ --- our mean-field description and prediction for the incoherence-coherence transition for the finite-size representation break down.

While we extended the mean-field theory for the stochastic Kuramoto model with all-to-all connectivity and uniform coupling strengths to heterogeneous connectivity  with non-uniform coupling strengths, certain constraints apply to our model.  These may limit the validity of our theory and prompt avenues for future research.
For instance, we have assumed that the coupling function $D(u)$ is odd. This assumption excludes, in particular, the Kuramoto-Sakaguchi model, which has a coupling function $D(u)=\sin{(u+\alpha)}$ with a phase-lag $\alpha$. This phase-lag allows one to tune the coupling interaction to be a sine function vs cosine, distinguishing gradient-like and integrable dynamics, respectively (compare with Eqs.~\eqref{eq: stoch kuramotos model on graphs 1} without noise ($\beta^{-1}=0$) and implies different incoherence-coherence transitions (Note that a mix of such interaction is also  essential to observe symmetry breaking chimera states with nonuniform synchronization patterns on the network~\cite{PanaggioAbrams2014,BurylkoMartensBick2022}) --- extending our theory to such interactions would be of interest.
Coupling functions $D(u)$ of higher harmonic order have recently attracted much interest, which imply more complicated stability regimes and transitions between incoherence and coherence~\cite{AshwinBurylko2015,Ashwin2016,Kuramoto2003,BickBoehleKuehn}. Moreover, interactions with arbitrary (e.g. non-symmetric) coupling interactions $D=D(u_k,u_l)$ are possible~\cite{PietrasDaffertshofer2019} which imply directed graph topologies~\cite{KuehnXu}. While we studied the Kuramoto model with identical intrinsic frequencies, the presence of distributed frequencies is also of interest. Finally, extensions to other phase oscillator models, such as the Kuramoto model with inertia~\cite{Ermentrout1991,Rohden2012} or the theta neuron (or QIF neuron) that only performs rigid rotations corresponding to spiking above a threshold current, are worth mentioning.
It would be very useful to derive rigorous mean-field descriptions for the above-mentioned systems; today, mean-field descriptions are available only for full graph structures~\cite{MontbrioPazoRoxin2015,Kuehn2019power}.
Finally, one might also consider transitions between --- or bifurcations of --- states other than incoherence or coherence, such as chimera states or twisted states. Twisted states arise in bifurcations due to negative eigenvalues from the graph operator~\cite{chiba2018bifurcations}. It would be interesting to extend the mean-field theory developed here to such cases. Some work in these directions has been done in the context hypergraphs~\cite{kuehn2022vlasov}.

Another important avenue for future research is to clarify the validity regime for mean-field descriptions for very sparse and very heterogeneous structures. Note carefully that the effective dimension of the VFPE mean-field equation~\eqref{eq: VFPE3} will grow the more heterogeneous the graph is due to the dependence of the node type encoded by points in $\Omega$. Hence, a mean-field description can still exist, and our results indicate that this mean-field is often still very useful to determine whether some number of nodes starts to transition from incoherence to partial synchronization. Yet, for more complex patterns, involving an interplay between all different mean-field node types on very sparse structures, we anticipate that the mean-field description will eventually not be of much use as it is also high-dimensional. In summary, to fully determine the theoretical and practical limitations of heterogeneous mean-field VFPEs remains a challenging problem for future work.\medskip

\section{Acknowledgments}
MAG and CK gratefully thank the TUM International Graduate School of Science and Engineering (IGSSE) for support via the project ``Synchronization in Co-Evolutionary Network Dynamics (SEND).'' BJ and EAM acknowledge the DTU International Graduate School for support via the EU-COFUND project ``Synchronization in Co-Evolutionary Network Dynamics (SEND)''. CK also acknowledges partial support by a Lichtenberg Professorship funded by the Volkswagen Stiftung.

\section{Data Availability Statement}
Data sharing is not applicable to this article as no new data were created or analyzed in this study.


\bibliographystyle{unsrt}

\end{document}